\theoremstyle{plain}
\newtheorem{theorem}{Theorem}
\newtheorem{lemma}{Lemma}
\newtheorem{corollary}{Corollary}
\newtheorem{definition}{Definition}
\newtheorem{question}{Question}
\newtheorem*{question*}{Question}
\definecolor{myblue}{rgb}{.8, .8, 1}
\definecolor{mathblue}{rgb}{0.2472, 0.24, 0.6} 
\definecolor{mathred}{rgb}{0.6, 0.24, 0.442893}
\definecolor{mathyellow}{rgb}{0.6, 0.547014, 0.24}
\crefname{lemma}{Lemma}{Lemmas}
\Crefname{lemma}{Lemma}{Lemmas}
\crefname{thm}{Theorem}{Theorems}
\Crefname{thm}{Theorem}{Theorems}
\crefname{assumption}{Assumption}{Assumptions}
\Crefname{assumption}{Assumption}{Assumptions}
\def \cX {\mathcal{X}}
\begin{document}

\title{Minimax redundancy for Markov chains with large state space}

\author{Kedar Shriram Tatwawadi, Jiantao Jiao, Tsachy Weissman}

\date{\today}


%
\maketitle
\begin{abstract}
For any Markov source, there exist universal codes whose normalized codelength approaches the Shannon limit asymptotically as the number of samples goes to infinity. This paper investigates how fast the gap between the normalized codelength of the ``best'' universal compressor and the Shannon limit (i.e. the compression redundancy) vanishes non-asymptotically in terms of the alphabet size and mixing time of the Markov source. We show that, for Markov sources whose relaxation time is at least $1 + 
\frac{(2+c)}{\sqrt{k}}$, where $k$ is the state space size (and $c>0$ is a constant), the phase transition for the number of samples required to achieve vanishing compression redundancy is precisely $\Theta(k^2)$. 
\end{abstract}
%

\section{Introduction}
For any data source that can be modeled as a stationary ergodic stochastic process, it is well known in the literature of universal compression that there exist compression algorithms without any knowledge of the source distribution, such that its performance can approach the fundamental limit of the source, also known as the Shannon entropy, as the number of observations tends to infinity. The existence of universal data compressors has spurred a huge wave of research around it. A large fraction of practical lossless compressors are based on the Lempel--Ziv algorithms~\cite{ziv1977universal,
ziv1978compression} and their variants, and the normalized codelength of a universal source code is also widely used to measure the \emph{compressibility} of the source, which is based on the idea that the normalized codelength is ``\emph{close}'' to the true entropy rate given a moderate number of samples. 

There has been considerable efforts trying to quantify how fast the codelength of a universal code approaches the Shannon entropy rate. One of the general statements pertaining to distributions parametrized by a finite dimensional vector is due to Rissanen~\cite{rissanen1984universal}. Let $X^n$ be a sequence of random variables generated from some stationary distribution $p_{\theta}(x^n)$ with parameters $\theta$. A compressor $L$ for the $X^n$ sequence is characterized by its length function $L(x^n)$, which is the length (in bits), of the code corresponding to every realization $x^n$ of $X^n$. 

The entropy $H_\theta(X^n)$ quantifies the fundamental limit of compression under model $p_\theta$, which is given by
\begin{equation}
H_{\theta}(X^n) = \sum_{x^n} p_{\theta}(x^n) \log _2 \frac{1}{p_{\theta}(x^n)}
\end{equation}
\footnote{Throughout the paper, we will work with $\log \equiv \log_2$.}
The redundancy for a compressor with length function $L(X^n)$ is defined as:
\begin{equation}
r_n(L,\theta) = \frac{1}{n} \left( \mathbb{E}[L(X^n)|\theta] - H_{\theta}(X^n) \right)
\end{equation}
Rissanen~\cite{rissanen1984universal} states that if $\theta \in \Theta \subset \mathbb{R}^d$, and if the parameter $\theta$ can be estimated with ``\emph{parametric}'' rate asymptotically (with $d,\theta$ fixed), then there exists some compressor $L$ such that
\begin{align}
r_n(L,\theta)= \frac{d\log n}{2n} + O\left(\frac{1}{n}\right) 
\end{align}
as $n\to \infty$. Moreover, fixing $d,\epsilon>0$, for any uniquely decodable code $L$, its redundancy satisfies 
\begin{align} \label{eqn.rissanenlowerbound}
r_n(L,\theta)\geq (1-\epsilon) \frac{d\log n}{2n}
\end{align}
as $n\to \infty$ for \emph{all} values of $\theta$ except for a set whose volume vanishes as $n \to \infty$ while other parameters are fixed.

The focus of Rissanen~\cite{rissanen1984universal} was \emph{asymptotic}, i.e., the characterization of the redundancy as the number of samples $n\to \infty$ while other parameters remain fixed. There has been considerable generalizations in the asymptotic realm, such as~\cite{atteson1999asymptotic,merhav1995strong,Feder--Merhav1996,xie1997minimax,xie2000asymptotic}.

In modern applications, the parameter dimension $d$ may be comparable or even larger than the number of samples $n$. For example, in the Google 1 Billion Word dataset (1BW)~\cite{chelba2013one}, the number of distinct words is more than 2 million, and the data distribution is also not i.i.d., which makes us wonder whether we are operating in the asymptotics when any universal code is applied. We emphasize that the implications of~(\ref{eqn.rissanenlowerbound}) may not be \emph{correct} in the non-asymptotic setting (i.e. when the paprameter dimension $d$ is comparable to the number of samples $n$). Indeed, interpreting~(\ref{eqn.rissanenlowerbound}) in the non-asymptotic way, it implies that it requires at least $n \gg d \log d$ samples to achieve vanishing redundancy. However, when the data source is i.i.d. with alphabet size $d+1$, the precisely non-asymptotic computation shows that the phase transition between vanishing and non-vanishing redundancy is at $n \asymp d$~\cite{jiao2017estimating}.

There exists extensive literature on quantifying the redundancy in the non-asymptotic regime. Davisson~\cite{davisson1983minimax} considered the case of memoryless sources and $m$-Markov sources, and obtained non-asymptotic upper and lower bounds (i.e. bounds that are explicit in all the parameters involved) on the \emph{average case} minimax redundancy, which is defined by
\begin{align}\label{eqn.minimaxredundancy}
\inf_{L} \sup_{\theta \in \Theta} r_n(L,\theta),
\end{align}
where the infimum is taken over all uniquely decodable codes \cite{cover2012elements} (section 5.1). However, the lower bound for Markov sources with alphabet size $k$ in~\cite{davisson1983minimax} is non-zero only when $n \gg k^2\log k$  (See Appendix \ref{appendix.davisson}) and are not tight in the sense that the upper and lower bounds do not match in scaling in the large alphabet regime. The work~\cite{Orlitsky--Santhanam2004speaking,drmota2004precise,Szpankowski--Weinberger2012minimax} mainly considered a variant called \emph{worst case} minimax redundancy, and showed that for i.i.d. sources with alphabet size $k$, the worst case minimax redundancy~\footnote{Precisely, the minimax regret with respect to a coding oracle that only uses codes corresponding to i.i.d. distributions.} vanishes if and only if the number of samples $n \gg k$ non-asymptotically. The problem of worst-case minimax redundancy for Markov sources was considered in \cite{jacquet2004markov}.

The focus of this paper is on the average case minimax redundancy for Markov chains. We refine the minimax redundancy in~(\ref{eqn.minimaxredundancy}) and categorize different Markov chains by how fast it ``\emph{mixes}''. Informally, we ask the following question:
\begin{question}
How does the minimum number of samples required to achieving vanishing redundancy depend on the state space size and mixing time?
\end{question}


\section{Preliminaries}

Consider a first-order Markov chain $X_1,X_2,\ldots$ on a finite state space $\mathcal{X}= \{1,2,\ldots, k\} \triangleq [k]$ with transition kernel $K$. We denote the entries of $K$ as $K_{ij}$, that is, $K_{ij} = P_{X_2|X_1}(j|i)$ for $i, j \in \cX$. 
We say that a Markov chain is \emph{stationary} if $P_{X_1}$, the distribution of $X_1$, satisfies
\begin{equation}
\sum_{i = 1}^k P_{X_1} (i) K_{ij} = P_{X_1}(j) \qquad \text{for all }j \in \cX.
\end{equation}

We say that a Markov chain is \emph{reversible} if there exists a distribution $\pi$ on $\mathcal{X}$ which satisfies the detailed balance equations:
\begin{equation}
\pi_i K_{ij}  = \pi_j K_{ji} \qquad \text{for all }i,j \in \cX.
\end{equation}
In this case, $\pi$ is called the stationary distribution of the Markov chain. 

For a reversible Markov chain, its (left) spectrum of the operator $K$ consists of $k$ real eigenvalues $1 = \lambda_1 \geq \lambda_2 \geq \cdots \geq \lambda_k \geq -1$. We define the spectral gap of a reversible Markov chain as
\begin{equation}
\gamma(K) = 1- \lambda_2.
\end{equation}
The \emph{absolute spectral gap} of $K$ is defined as
\begin{equation}
\gamma^*(K) = 1 - \max_{i\geq 2} |\lambda_i|,
\end{equation}
and it clearly follows that, for any reversible Markov chain,
\begin{equation}
\gamma(K) \geq \gamma^*(K).
\end{equation}
The \emph{relaxation} time of a Markov chain is defined as
\begin{equation}
\tau_{\mathrm{rel}}(K) = \frac{1}{\gamma^*(K)}.
\end{equation}

The relaxation time of a reversible Markov chain (approximately) captures its mixing time, which informally is the smallest $n$ for which the marginal distribution of $X_n$ is very close to the Markov chain's stationary distribution. We refer to~\cite{montenegro2006mathematical} for a survey. Intuitively speaking, the shorter the relaxation time $\tau_\mathrm{rel}$, the faster the Markov chain ``mixes'': that is, the shorter its ``memory'', or the sooner evolutions of the Markov chain from different starting states begin to look similar.

The multiplicative reversibilization of the transition matrix $K$ is defined as:
\begin{equation}
K^{*}_{ji} = \frac{\pi_iK_{ij}}{\pi_j} 
\end{equation}
$K^{*}$ is infact the transition matrix for the reverse Markov chain $X_n\rightarrow X_{n-1} \rightarrow \ldots \rightarrow X_1$. Note that for reversible chains $K^{*} = K$. The pseudo-spectral gap for a non-reversible chain (with transition matrix $K$) is defined as:
\begin{equation}
\gamma_{ps}(K) = \max_{r \geq 1} \frac{\gamma((K^{*})^rK^r)}{r}
\end{equation}
The pseudo-spectral gap for a non-reversible chain is related to the mixing time of the non-reversible Markov chain. 

%



We denote by $\mathcal{M}_1(k)$ the set of all discrete distributions with alphabet size $k$ (\textit{i.e.}, the $(k-1)$-probability simplex), and by $\mathcal{M}_2(k)$ the set of all Markov chain transition matrices on a state space of size $k$. Let $\mathcal{M}_{2,\text{rev}}(k) \subset \mathcal{M}_2(k)$ be the set of transition matrices of all stationary \emph{reversible} Markov chains on a state space of size $k$. We define a class of stationary Markov chains $\mathcal{M}_{2,\text{rev}}(k, \tau_{\mathrm{rel}}) \subset \mathcal{M}_{2,\text{rev}}(k)$ as follows:
\begin{equation}
\mathcal{M}_{2,\text{rev}}(k, \tau_{\mathrm{rel}}) = \{ K_{ij} \in \mathcal{M}_{2,\text{rev}}(k), \tau_{\mathrm{rel}}(K) \leq \tau_{\mathrm{rel}}\}.
\end{equation}
In other words, we consider stationary reversible Markov chains whose relaxation time is upper-bounded by $\tau_{\mathrm{rel}}$. 

Another probabilistic representation of reversible Markov chains is via random walk on undirected graphs. Consider an undirected graph (without multi-edges) on $k$ vertices. Let the weight on the undirected edge $ \{i,j\}$ be denoted as $w_{ij} \geq 0$. Due to the undirected nature of the graph, $w_{ij} = w_{ji} \geq 0, \mbox{  } \forall i,j \in [k]$. We also define $\rho_i$ and $\rho$ as: 
\begin{align*}
\rho_i &= \sum_{j=1}^k w_{ij}, \forall i \in [k]\\
\rho &= \sum_{i,j} w_{ij}
\end{align*} 

Here, $\rho_i$ corresponds to the row-sums of the weight matrix $W$, with entries $w_{ij}$. We can now consider a Markov chain corresponding to a random walk on this graph. The transition probabilities and the stationary distribution corresponding to a random walk on this weighted undirected graph are given by:
\begin{align}
K_{ij} &= \frac{w_{ij}}{\rho_{i}} \\
\pi_i &= \frac{\rho_i}{\rho}
\end{align}

We can verify that the transition matrix $K$ corresponds to a reversible Markov chain (i.e. $K \in \mathcal{M}_{2,\text{rev}}(k)$) as:
\begin{align}
\pi_i K_{ij} &= \frac{w_{ij}}{\rho}\\
			 &= \frac{w_{ji}}{\rho}\\
			 &= \pi_j K_{ji}
\end{align}

Conversely, we can understand any reversible Markov chain $\hat{K} \in \mathcal{M}_{2,\text{rev}}(k)$, with stationary distribution $\hat{\pi}$ as a random walk on an undirected graph with weights $\hat{w}_{ij}$:

\begin{equation}
\hat{w}_{ij} = \hat{\pi}_i \hat{K}_{ij}
\end{equation}

The quantity of interest in this paper is
\begin{align}
R_n(k, \tau_{\mathrm{rel}}) = \inf_{L} \sup_{K \in \mathcal{M}_{2,\mathrm{rev}}(k,\tau_{\mathrm{rel}})} r_n(L,\theta),
\end{align}
where the infimum is taken over all uniquely decodable codes \cite{cover2012elements} (section 5.1), and the supremum is taken over all stationary reversible Markov chains whose relaxation time is upper bounded by $\tau_{\mathrm{rel}}$. We define the quantity $n^*(k, \tau_{\mathrm{rel}}, \epsilon)$ as follows:
\begin{align}
n^*(k, \tau_{\mathrm{rel}}, \epsilon) \triangleq \min\{n: R_{n}(k,\tau_{\mathrm{rel}})\leq \epsilon\}. 
\end{align}

\subsection*{Notation}
The quantity $h(X)$ denotes the differential entropy of the continuous random variable $X$ with density function $f_X(x)$, and is given by: 
\begin{equation}
h(X) = \int f_X(x) \log_2 \frac{1}{f_X(x)} dx
\end{equation}
We define the KL-divergence $D(p_X||q_X)$ between two discrete distributions $p_X(x)$ and $q_X(x)$ as:
\begin{equation}
D(p_X||q_X) = \sum_x p_X(x) \log_2 \frac{p_X(x)}{q_X(x)}
\end{equation}  
Thoughout the paper, we will use the notation $o(.)$ and $O(.)$ to denote the asymptotic growth of a function. Let $f(k)$ and $g(k)$ be non-negative functions. We say that function $f(k) = O(g(k))$, if $f(k) \leq Cg(k)$ for some $C >0$ and all $n > C$. We say that function $f(k) = o(g(k))$, if the asymptotic growth of $f(k)$ is strictly slower than that of $g(k)$, i.e.
\begin{equation*}
\lim_{k \rightarrow \infty} \frac{f(k)}{g(k)} = 0
\end{equation*}



\section{Main Results}

The main theorems in this paper are: 
\begin{theorem}\label{thm.main}
For $\tau_{\mathrm{rel}} \geq 1 + \frac{2+c}{\sqrt{k}} $, 
\begin{align}
R_n(k, \tau_{\mathrm{rel}}) & \geq \frac{k(k-1)}{4n}\log \frac{2(n-1)}{k(k-1)} + \frac{k(k-1)}{4n} \log \frac{e}{16\pi (1+ \frac{2+c}{\sqrt{k}})} - \frac{\log k}{n}		.
\end{align}
for $k\geq k_c$, where $c>0$ is a constant and $k_c$ is a constant depending only on $c$. 
\end{theorem}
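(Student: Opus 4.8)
The plan is to lower‑bound the minimax redundancy by the Bayesian mutual information and to exhibit one well‑chosen prior on reversible kernels for which this mutual information already meets the stated bound. Concretely, by the standard redundancy--capacity argument (Kraft's inequality together with $\Expect[L(X^n)]\ge H(X^n)$), for \emph{any} prior $w$ supported on $\mathcal{M}_{2,\mathrm{rev}}(k,\tau_{\mathrm{rel}})$ one has
\[
n\,R_n(k,\tau_{\mathrm{rel}}) \;\ge\; I(K;X^n), \qquad K\sim w,\ \ X^n\mid K\sim \text{stationary chain with kernel }K ,
\]
so it suffices to design $w$ making $I(K;X^n)$ large. The ``target'' coefficient $\tfrac{k(k-1)}{4}=\tfrac d2$ with $d=\tfrac{k(k-1)}{2}$ already tells us the prior should have $\binom k2$ effective free parameters and be supported near the uniform (i.e.\ almost i.i.d.) kernel.

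\textbf{The prior.} I would realize each reversible chain as a random walk on the weighted complete graph on $[k]$, setting, for $i\neq j$, $w_{ij}=\tfrac1{k^2}(1+M_{ij})$ where $M$ is a random \emph{symmetric} matrix with $M\mathbf 1=\mathbf 0$ (so that $\pi$ is exactly uniform and each edge is traversed $\approx \tfrac{2(n-1)}{k(k-1)}$ times), and whose above‑diagonal entries are, modulo the centering enforcing $M\mathbf 1=\mathbf 0$, i.i.d.\ copies of a bounded mean‑zero variable $\varepsilon$ of constant variance $\sigma^2$. The $\binom k2$ coordinates of $\varepsilon$ are the free parameters; this is exactly where reversibility cuts the $k(k-1)$ parameters of a generic chain down to $d=k(k-1)/2$. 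The symmetrized kernel is $\tfrac1k\mathbf 1\mathbf 1^{\top}+\tfrac1k M$, so its non‑Perron eigenvalues are $\tfrac1k$ times those of $M$, and a non‑asymptotic Wigner / Füredi--Komlós operator‑norm bound gives $\|M\|_{\mathrm{op}}\le(2+o(1))\sigma\sqrt k$ with high probability, hence $\tau_{\mathrm{rel}}(K)\le 1+\tfrac{2+c}{\sqrt k}$ once $\sigma$ is a suitable constant and $k\ge k_c(c)$; the semicircle edge is the source of the ``$2$'' and the slack of the ``$+c$''. One then restricts $w$ to the high‑probability event that this spectral bound holds, which changes the entropy bookkeeping below only negligibly.

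\textbf{Lower bound on $I(K;X^n)$.} Since the base chain mixes in $O(1)$ steps, a variance/Bernstein estimate shows that with probability $1-o(1)$ every state is visited $(1\pm o(1))n/k$ times; conditioning on the occupation counts and discarding the atypical event costs $O(\log k)$ bits, which becomes the $-\tfrac{\log k}{n}$ term. Conditionally on the occupation counts, the successive exits from each state $i$ are i.i.d.\ draws from row $i$, and a Poissonization makes the $\binom k2$ per‑edge transition totals independent, with edge $\{i,j\}$ producing a count that to leading order is $\mathrm{Poisson}(\lambda(1+\varepsilon_{ij}))$, $\lambda=\tfrac{2(n-1)}{k(k-1)}$. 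Thus $I(K;X^n)\gtrsim \sum_{\{i,j\}} I(\varepsilon_{ij};\text{count}_{ij})$, and for a single such edge, bounding the posterior differential entropy by that of the Gaussian with matching Fisher information (van Trees, or a data‑processing comparison against the Gaussian channel) gives $I(\varepsilon_{ij};\text{count}_{ij})\ge \tfrac12\log\lambda + h(\varepsilon)-\tfrac12\log(2\pi e)-o(1)$. Summing over the $\binom k2=\tfrac{k(k-1)}2$ edges, substituting $\lambda=\tfrac{2(n-1)}{k(k-1)}$, inserting the chosen law of $\varepsilon$ (whose spread is pinned by the $\tau_{\mathrm{rel}}$‑budget, which is what couples the universal constant $\tfrac{e}{16\pi}$ to the factor $1+\tfrac{2+c}{\sqrt k}$), and dividing by $n$ yields the claimed inequality.

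\textbf{Main obstacle.} Two points are delicate. First, the non‑asymptotic spectral bound on the random symmetric perturbation $M$ with the \emph{explicit} constant $2+c$ valid for all $k\ge k_c(c)$: this is what fixes the relaxation‑time threshold, and it is consistent with the collapse to the $(k-1)$‑parameter i.i.d.\ model at $\tau_{\mathrm{rel}}=1$, where the construction can no longer support $\binom k2$ perturbations of $\Theta(1)$ relative size. Second, and more technically involved, is the passage from the genuine trajectory law to a product of independent per‑edge Poisson observations: the occupation counts and the realized transitions are mutually determined by the path, so one must either attach i.i.d.\ ``planned‑exit'' sequences to each state and run the chain off them, or invoke a method‑of‑types estimate for Markov chains, and then verify that all corrections — the BEST/Eulerian normalization, the in‑degree $=$ out‑degree constraints, the occupation‑count fluctuations, and the Poissonization error — are lower order and leave the leading term $\tfrac{k(k-1)}{4}\log\tfrac{2(n-1)}{k(k-1)}$ intact. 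I expect this second step to be the main technical burden.
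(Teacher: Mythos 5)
Your proposal uses the same outer framework as the paper---lower-bounding $R_n$ by a Bayes mutual information $I(\theta;X^n)/n$, choosing a random reversible kernel close to the memoryless chain, and invoking random-matrix theory to verify the relaxation-time constraint $\tau_{\mathrm{rel}}\le 1+\tfrac{2+c}{\sqrt k}$---but it diverges substantially in how $I(\theta;X^n)$ is actually lower-bounded. The paper writes $I(\theta;X^n)=h(\theta)-h(\theta\mid X^n)$, takes the prior to be the \emph{exact} uniform law on the $\binom{k}{2}$-simplex of symmetric edge weights (realized as i.i.d.\ $\mathsf{Exp}(1)$ weights on the loopless complete graph, conditioned on the high-probability event $\tau_{\mathrm{rel}}(K)\le\tau^0_{\mathrm{rel}}$, costing one bit), computes $h(\theta)$ from the simplex volume via Stirling, and then upper-bounds $h(\theta\mid X^n)\le\sum_{i<j} h(\theta_{i,j}-\hat\theta_{i,j})$ for the empirical pair frequencies $\hat\theta_{i,j}=\tfrac{N(i,j)+N(j,i)}{n-1}$, closing the loop with the Gaussian max-entropy bound plus a variance estimate for $\hat\theta_{i,j}$ obtained from Paulin's Markov-chain concentration inequality and a pseudo-spectral-gap comparison between the pair chain $(X_t,X_{t+1})$ and the base chain (Lemma~\ref{lemma.gammaps}). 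This route never needs to decouple per-edge counts, Poissonize, or invoke a BEST/Eulerian method-of-types argument: the entire dependence structure of the trajectory is absorbed into the scalar variance $\mathsf{Var}(\hat\theta_{i,j}\mid\theta)\le 8\theta_{i,j}\tau^0_{\mathrm{rel}}/(n-1)$, and the $\frac{e}{16\pi\tau^0_{\mathrm{rel}}}$ constant falls out mechanically. Your plan---condition on occupation counts, Poissonize, treat each edge as a near-Gaussian channel, apply van Trees---is the more ``capacity-achieving channel'' picture and should give the same leading $\frac{k(k-1)}{4n}\log\frac{2(n-1)}{k(k-1)}$ term, but as you yourself flag, the decoupling step (Eulerian degree constraints, Poissonization error, occupation-count fluctuations) is precisely the technical burden the paper's estimator-variance route eliminates. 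Two further concrete deviations worth noting: (i) your prior enforces $M\mathbf 1=\mathbf 0$ so that $\pi$ is exactly uniform, which removes $\Theta(k)$ of the $\binom{k}{2}$ degrees of freedom and would perturb the lower-order constants; the paper instead lets $\pi$ be random and separately controls $\max_i|\rho_i/k-1|=o(1)$ via Bordenave's lemma and Borel--Cantelli, and (ii) your perturbation is bounded with tunable variance $\sigma^2$, whereas the paper uses unbounded exponential weights and extracts the spectral edge $2+c$ from the Bai--Yin theorem after normalizing by the row sums, which is what ties the prior's support exactly to $\mathcal{M}_{2,\mathrm{rev}}(k,\tau^0_{\mathrm{rel}})$ and makes $h(\theta)$ computable in closed form.
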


Theorem \ref{thm.main} is proved in the section \ref{proof.thm.main}.  

\begin{theorem}\label{thm.main2}
The average-case minimax redundancy $R_n(k)$ for Markov sources is defined as:
\begin{align*}
R_n(k) = \inf_{L} \sup_{\theta \in \mathcal{M}_{2}(k)} r_n(L,\theta)
\end{align*}
Then, the following upper bound holds:
\begin{align*}
R_n(k) &\leq \frac{2k^2}{n}\log_2 \left( \frac{n}{k^2}  + 1\right) + \frac{k^2}{n} + \frac{\log_2 k + 3}{n}
\end{align*}
\end{theorem}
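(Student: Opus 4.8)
The goal is an upper bound on $R_n(k)$, so it suffices to produce one explicit uniquely decodable code $L$ whose redundancy against \emph{every} transition matrix $K\in\mathcal{M}_2(k)$ (and every initial law) obeys the claimed inequality; taking the infimum over codes then finishes. The plan is to use the sequential probability assignment $Q$ on $\cX^n$ built from per-state add-one (Laplace) predictors and to control the resulting KL divergence by a worst-case per-sequence estimate. Concretely, set $Q(x^n)=\frac1k\prod_{t=2}^n q(x_t\mid x^{t-1})$, where with $i=x_{t-1}$ we take $q(j\mid x^{t-1})=\frac{N_{ij}(t-1)+1}{N_i(t-1)+k}$, $N_{ij}(s)$ denoting the number of $i\to j$ transitions among $x_1,\dots,x_s$ and $N_i(s)=\sum_j N_{ij}(s)$. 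This is a genuine probability on $\cX^n$, so Kraft's inequality gives a prefix code $L$ with $L(x^n)=\ceil{\log_2 1/Q(x^n)}$, and for any $K$ and any initial law,
\begin{equation}
n\,r_n(L,K)\;\le\;D\!\left(P_K^{(n)}\,\middle\|\,Q\right)+1\;\le\;\sup_{x^n}\log_2\frac{P_K(x^n)}{Q(x^n)}+1 .
\end{equation}
No mixing or reversibility is used, which is consistent with the supremum ranging over all of $\mathcal{M}_2(k)$.

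Because the Laplace predictor is exchangeable, the telescoping product collapses to the closed form $Q(x^n)=\frac1k\prod_{i=1}^k \frac{(k-1)!\,\prod_j N_{ij}!}{(N_i+k-1)!}$ in terms of the final counts. Since $P_K(x^n)=P_{X_1}(x_1)\prod_{i,j}K_{ij}^{N_{ij}}\le\prod_{i,j}K_{ij}^{N_{ij}}\le\prod_{i,j}(N_{ij}/N_i)^{N_{ij}}$ (the maximum-likelihood bound, applied row by row), and using the multinomial inequality $\frac{N_i!}{\prod_j N_{ij}!}\prod_j(N_{ij}/N_i)^{N_{ij}}\le1$ for each $i$, one obtains
\begin{equation}
\log_2\frac{P_K(x^n)}{Q(x^n)}\;\le\;\log_2 k+\sum_{i=1}^k\log_2\binom{N_i+k-1}{k-1},\qquad \sum_{i=1}^k N_i=n-1 .
\end{equation}

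It remains to maximize $\sum_{i}\log_2\binom{N_i+k-1}{k-1}$ over nonnegative integers with $\sum_iN_i=n-1$. The map $m\mapsto\log_2\binom{m+k-1}{k-1}$ is concave, since its successive increments equal $\log_2\frac{m+k}{m+1}$, which decrease in $m$; hence by Jensen the sum is at most $k\log_2\binom{(n-1)/k+k-1}{k-1}$. Bounding the binomial by a Stirling-type estimate converts the $(n-1)/k$ appearing inside the logarithm into the claimed $n/k^2$ factor, with the leftover multiplicative constants absorbed into the additive $k^2$; adding $\ceil{\log_2 k}\le\log_2 k+1$ for transmitting $X_1$ and accounting for the at-most-one-bit rounding loss yields $n\,r_n(L,K)\le 2k^2\log_2\!\left(\tfrac{n}{k^2}+1\right)+k^2+\log_2 k+3$ for every $K$, which is the assertion after dividing by $n$.

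The reduction in the first paragraph (telescoping identity, Kraft/KL inequality) is routine, and the per-state decomposition in the second paragraph is just the maximum-likelihood and multinomial bounds. The only step requiring genuine care is the last one: one must apply concavity and the binomial estimate in the right order so that the logarithm's argument is $n/k^2$ rather than $n/k$, and then track the Stirling constants so that they collapse cleanly into the stated coefficients $2k^2$, $k^2$, $\log_2 k+3$. I expect this bookkeeping — not any conceptual difficulty — to be the main obstacle, and the stated coefficient $2k^2$ (rather than the $\tfrac12 k^2$ that a tight analysis would give) suggests that the intended argument deliberately trades constants for a short, self-contained derivation.
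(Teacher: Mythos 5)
Your proof is correct in spirit and takes a genuinely different route from the paper. The paper uses a \emph{two-part code}: it first transmits the $k^2$ transition counts $N(a,b)$ via an Elias-type universal integer code (costing $\le 2\log_2(N(a,b)+1)+1$ bits each, whence the $2k^2\log_2(n/k^2+1)+k^2$ term via Jensen), then uses a first-order arithmetic coder whose expected codelength is controlled by the empirical conditional entropy $H_1(x^n)$; the final step is the concavity inequality $\mathbb{E}[H_1]\le H_\theta(X_2\mid X_1)$, which makes the sequence-coding part contribute nothing to the redundancy beyond $\log_2 k + 3$. You instead use a \emph{mixture/Bayesian code} (a per-state add-one Laplace predictor), bound the KL divergence pointwise via the maximum-likelihood plug-in and the multinomial inequality, and then optimize the resulting $\sum_i\log_2\binom{N_i+k-1}{k-1}$ by discrete concavity. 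Both are classical and give the same $k^2\log(n/k^2)$ order; yours is cleaner conceptually (no separate parameter transmission, no rounding of each count) and in fact delivers a \emph{tighter} leading constant — your exact per-state regret is the log-binomial, whereas the paper's $2\log_2(N+1)+1$ overhead per count is where the factor $2$ originates, so your suspicion about the provenance of the loose coefficient is right. The one place your writeup leaves genuine work is the final estimate: to land exactly on the stated constants you need, e.g., the entropy bound $\log_2\binom{m+k-1}{k-1}\le(m+k-1)h_2\!\big(\tfrac{k-1}{m+k-1}\big)$ together with the elementary inequality $\log_2(t+1)+t\log_2(1+1/t)\le 2\log_2(t+1)+1$ for all $t\ge 0$ (applied with $t\approx n/k^2$); a naive $\binom{a+b}{b}\le(e(a+b)/b)^b$ estimate leaves a residual $\approx 0.44k^2$ that only closes when $n\gtrsim 0.36\,k^2$, so the ``bookkeeping'' you flag as the main obstacle does require choosing the right form of Stirling, but it does go through.
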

Note that as $R_n(k) \geq R_n(k, \tau_{\mathrm{rel}})$, the upper bound in theorem \ref{thm.main2} is valid for $R_n(k, \tau_{\mathrm{rel}})$. Also, as $R_n(k) \geq R_n(k, \tau_{\mathrm{rel}})$, the lower bound in theorem \ref{thm.main} is valid for $R_n(k)$. Theorem \ref{thm.main2} is proved in the section \ref{proof.thm.main2}.  The following corollary is immediate.  

\begin{corollary}\label{corollary.main}
If $n\gg k^2$, then $R_{n}(k,\tau_{\mathrm{rel}})\to 0$ uniformly over $\tau_{\mathrm{rel}}$. 
For $\tau_{\mathrm{rel}} \geq 1 + \frac{2+c}{\sqrt{k}}$, where $c>0$ is a positive constant, there exists a constant $c_1$ such that if $n = c_1 k^2$, then $R_n(k, \tau_{\mathrm{rel}})$ is bounded away from zero as $k\to \infty$. 
\end{corollary}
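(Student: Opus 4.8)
The plan is to read off both halves of the corollary directly from Theorems \ref{thm.main} and \ref{thm.main2}, so the work is entirely in tracking the asymptotics of the explicit bounds. For the first (achievability) half, I would start from the inequality $R_n(k,\tau_{\mathrm{rel}}) \leq R_n(k)$, which is noted in the excerpt, and then apply Theorem \ref{thm.main2}. Writing $n = \omega(k^2)$, set $t = n/k^2 \to \infty$; the dominant term $\frac{2k^2}{n}\log_2(n/k^2 + 1) = \frac{2}{t}\log_2(t+1)$ tends to $0$ since $\log t = o(t)$, the term $k^2/n = 1/t \to 0$, and $(\log_2 k + 3)/n \leq (\log_2 k + 3)/k^2 \to 0$. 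Since none of these three bounds depends on $\tau_{\mathrm{rel}}$, the convergence $R_n(k,\tau_{\mathrm{rel}}) \to 0$ is uniform over $\tau_{\mathrm{rel}}$, as claimed.

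For the second (converse) half, I would invoke Theorem \ref{thm.main} with the hypothesis $\tau_{\mathrm{rel}} \geq 1 + \frac{2+c}{\sqrt{k}}$ and substitute $n = c_1 k^2$ for a constant $c_1$ to be chosen. The leading term becomes
\begin{align*}
\frac{k(k-1)}{4n}\log \frac{2(n-1)}{k(k-1)} = \frac{k-1}{4c_1 k}\log \frac{2(c_1 k^2 - 1)}{k(k-1)} \longrightarrow \frac{1}{4c_1}\log(2c_1)
\end{align*}
as $k \to \infty$. The second term $\frac{k(k-1)}{4n}\log\frac{e}{16\pi(1 + (2+c)/\sqrt{k})}$ converges to $\frac{1}{4c_1}\log\frac{e}{16\pi}$, a finite negative constant independent of $c_1$, and the last term $\frac{\log k}{n} = \frac{\log k}{c_1 k^2} \to 0$. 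Hence for $n = c_1 k^2$ the lower bound in Theorem \ref{thm.main} converges to $\frac{1}{4c_1}\bigl(\log(2c_1) + \log\frac{e}{16\pi}\bigr) = \frac{1}{4c_1}\log\frac{ec_1}{8\pi}$. Choosing $c_1$ large enough that $\frac{ec_1}{8\pi} > 1$ (e.g. any $c_1 > 8\pi/e$) makes this limit a strictly positive constant, so $R_n(k,\tau_{\mathrm{rel}})$ is bounded away from $0$ as $k \to \infty$.

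There is no real obstacle here beyond bookkeeping: one must make sure $k$ is taken large enough to exceed the threshold $k_c$ from Theorem \ref{thm.main} and to make the displayed limits valid, and one should note that any fixed $c_1 > 8\pi/e$ works (the statement only asserts existence of such a $c_1$). The only mildly delicate point is confirming that the negative constant contributed by the second term of Theorem \ref{thm.main} does not overwhelm the positive $\frac{1}{4c_1}\log(2c_1)$ term — but since the former is a fixed constant while $\log(2c_1) \to \infty$ as $c_1 \to \infty$, taking $c_1$ sufficiently large resolves this, which is exactly the condition $c_1 > 8\pi/e$ identified above.
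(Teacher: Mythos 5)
Your argument is correct and is exactly what the paper intends: the authors merely assert that the corollary ``is immediate'' from Theorems \ref{thm.main} and \ref{thm.main2} and give no further detail. Your asymptotic bookkeeping --- substituting $n = \omega(k^2)$ into the upper bound of Theorem \ref{thm.main2} (noting it is independent of $\tau_{\mathrm{rel}}$, hence uniform) and $n = c_1 k^2$ with any fixed $c_1 > 8\pi/e$ into the lower bound of Theorem \ref{thm.main} --- is precisely the elided calculation.
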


Analyzing $R_{n}(k,\tau_{\mathrm{rel}})$ over reversible Markov chains, gives us a more refined understanding of the compression redundancy. From theorem \ref{thm.main2}, we observe that for any Markov distribution, we can achive $\epsilon$ redundancy (for any constant $\epsilon > 0$) using $n \propto k^2$ samples. On the other hand, theorem \ref{thm.main} tells us that, even for the small family of fast mixing reversbile chains, in the worst case, at least $O(k^2)$ samples are necessary to obtain $\epsilon$ redundancy.  
   
Figure~\ref{fig.phasetransition} provides a pictorial illustration of $n^*(k, \tau_{\mathrm{rel}}, \epsilon)$ when $\epsilon$ is a small constant. The case $\tau_{\mathrm{rel}} = 1$ corresponds to i.i.d. distribution, and it follows from~\cite{jiao2017estimating} that $n^*(k, \tau_{\mathrm{rel}}, \epsilon) = \Theta(k)$ for small constant $\epsilon$. Interestingly, when the Markov chain becomes slightly ``non-i.i.d.'', the required sample size immediately jumps to $\Theta(k^2)$ and remains there no matter how large the $\tau_{\mathrm{rel}}$ is. Similar phenomena exist in the literature of entropy rate estimation for Markov chains~\cite{han2018entropy}, where the phase transitions for consistent entropy rate estimation happens at $\frac{k}{\log k}$ for i.i.d. data, and $\frac{k^2}{\log k}$ when the relaxation time is above $1 + \Omega \left( \frac{\log^2 k}{\sqrt{k}} \right)$. In other words, even if we use the codelength of the ``best'' universal code to estimate the entropy rate of the Markov source, it still requires considerably more samples than the information theoretically optimal entropy rate estimator that does not go through the construction of a code.

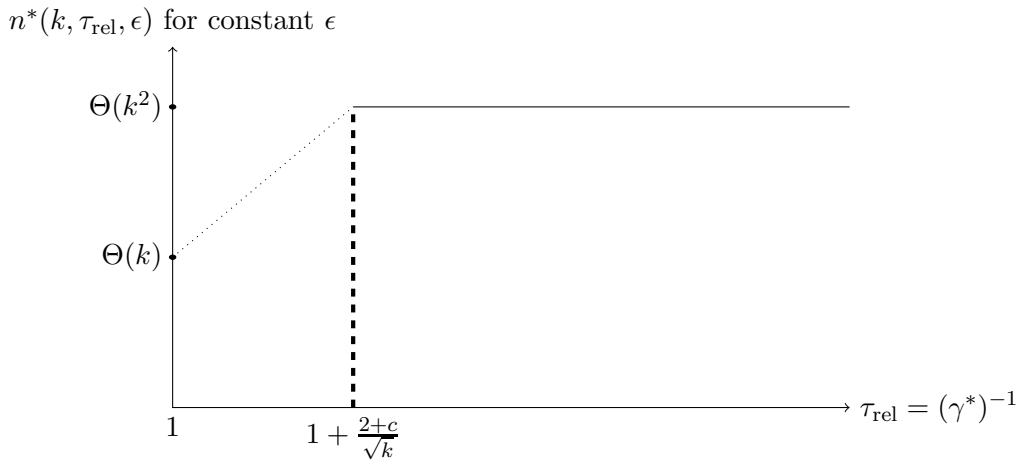
\begin{figure*}[htbp]
	\centering
	\begin{tikzpicture}[xscale=6, yscale=4]
	\draw [<->] (0,1.2) -- (0,0) -- (1.5,0);
	\draw [dashed, ultra thick] (0.4,0) -- (0.4,1.0);
	\draw [dotted] (0,0.5) -- (0.4,1.0);
	\draw (0.4, 1.0) -- (1.5, 1.0);

	\node [below] at (0,0) {1};
	\node [left] at (0,1.0) {$\Theta(k^2)$};
	\filldraw (0,1.0) circle (0.2pt);
	\node [left] at (0,0.5) {$\Theta(k)$};
	\filldraw (0,0.5) circle (0.2pt);
	\node [below] at (0.4,0) {$1+ \frac{2+c}{\sqrt{k}}$};
	\node [right] at (1.5,0) {$\tau_{\mathrm{rel}} = (\gamma^*)^{-1}$};
	\node [above] at (0,1.2) {$n^*(k, \tau_{\mathrm{rel}}, \epsilon)$ for constant $\epsilon$};
	\end{tikzpicture}
\captionof{figure}{The figure plots the $n^*(k, \tau_{\mathrm{rel}}, \epsilon)$ for a fixed small enough $\epsilon>0$, against the relaxation time constraint $\tau_{\mathrm{rel}}$. Note that $\tau_{\mathrm{rel}} = 1$ corresponds to i.i.d data, and hence 
$n^*(k, 1, \epsilon) = \Theta(k)$~\cite{jiao2017estimating}. }
	\label{fig.phasetransition}
	\end{figure*}

%
\section{Theorem \ref{thm.main} Proof Roadmap}\label{proof.thm.main}

We first conduct the continuous approximation of the redundancy, which is given by the following lemma. 

\begin{lemma}\label{lemma.contapproximation}
For any uniquely decodable code $L$, we have
\begin{align}
r_n(L,\theta)  & \geq \frac{1}{n} D(p_\theta(x^n) \| q_L(x^n)),
\end{align}
where $q_L(x^n) = \frac{2^{-L(x^n)}}{\sum_{x^n} 2^{-L(x^n)}}$. 
\end{lemma}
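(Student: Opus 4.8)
The plan is to reduce the redundancy to a KL divergence by passing from the codelength function $L$ to the sub-probability measure $2^{-L(\cdot)}$ and normalizing. The only external ingredient needed is the Kraft--McMillan inequality: since $L$ is uniquely decodable, $c_L \triangleq \sum_{x^n} 2^{-L(x^n)} \le 1$ (see \cite{cover2012elements}, Section 5.1). Consequently $q_L(x^n) = 2^{-L(x^n)}/c_L$ is a genuine probability mass function on the set of length-$n$ strings, and we may write, for every realization $x^n$,
\begin{align}
L(x^n) = \log \frac{1}{q_L(x^n)} + \log \frac{1}{c_L} \ \geq\ \log \frac{1}{q_L(x^n)},
\end{align}
where the inequality uses $c_L \le 1$, hence $\log(1/c_L) \ge 0$.

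Next I would take expectations with respect to $p_\theta(x^n)$ on both sides, obtaining
\begin{align}
\mathbb{E}[L(X^n)\mid\theta] \ \geq\ \sum_{x^n} p_\theta(x^n)\log\frac{1}{q_L(x^n)}.
\end{align}
Subtracting $H_\theta(X^n) = \sum_{x^n} p_\theta(x^n)\log\frac{1}{p_\theta(x^n)}$ from both sides and combining the two sums yields
\begin{align}
\mathbb{E}[L(X^n)\mid\theta] - H_\theta(X^n) \ \geq\ \sum_{x^n} p_\theta(x^n)\log\frac{p_\theta(x^n)}{q_L(x^n)} \ =\ D\big(p_\theta(x^n)\,\|\,q_L(x^n)\big).
\end{align}
Dividing through by $n$ gives $r_n(L,\theta) \geq \frac{1}{n} D(p_\theta(x^n)\|q_L(x^n))$, as claimed. (Implicitly one should note that strings $x^n$ with $p_\theta(x^n)=0$ contribute nothing, and that $D(\cdot\|\cdot)$ is well-defined and nonnegative, although nonnegativity is not even needed here.)

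I do not anticipate a genuine obstacle: the argument is a two-line manipulation once Kraft--McMillan is in hand. The one point worth stating carefully is that the inequality holds for \emph{uniquely decodable} codes and not merely prefix codes, which is exactly why the McMillan form of Kraft's inequality (rather than the elementary prefix version) must be invoked; everything else is bookkeeping. The lemma is deliberately stated as an inequality because the slack $\log(1/c_L)$ is discarded, and this is harmless for the lower-bound purposes of Theorem~\ref{thm.main}, where we will subsequently lower-bound $\sup_K \frac1n D(p_\theta\|q_L)$ by a Bayesian (mutual-information) argument over a suitable prior on $\mathcal{M}_{2,\mathrm{rev}}(k,\tau_{\mathrm{rel}})$.
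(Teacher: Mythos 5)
Your proof is correct and essentially identical to the paper's: both invoke the Kraft--McMillan inequality to show $c_L = \sum_{x^n} 2^{-L(x^n)} \le 1$, discard the nonnegative term $\log(1/c_L)$, and recognize the remaining expectation as the KL divergence $D(p_\theta \| q_L)$. The only cosmetic difference is that you apply the inequality pointwise to $L(x^n)$ before taking expectations, whereas the paper manipulates the full expected-redundancy expression and drops the $\log(1/c_L)$ term at the end.
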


\begin{proof} Consider the redundancy $r_n(L,\theta)$:
\begin{align}
r_n(L,\theta) &= \frac{1}{n} \left[ E(L(X^n)|\theta) - H_{\theta}(X^n) \right]\\
			  &= \frac{1}{n} \left[ \sum_{x^n} p_{\theta}(x^n)L(x^n) - \sum_{x^n} p_{\theta}(x^n) \log \frac{1}{p_{\theta} (x^n)} \right]\\
              &=  \frac{1}{n} \left[  \sum_{x^n} p_{\theta}(x^n) \log \frac{p_{\theta} (x^n)\sum_{x^n} 2^{-L(x^n)}}{2^{-L(x^n)}} + \log \frac{1}{\sum_{x^n} 2^{-L(x^n)}}\right]
\end{align}
As $L$ is a uniquely decodable code \cite{cover2012elements} (section 5.1), we can now use the Kraft inequality \cite{cover2012elements} (Theorem 5.5.1) for the lengths $L(x^n)$ to obtain:
\begin{align}
r_n(L,\theta)   &\geq \frac{1}{n} \left[  \sum_{x^n} p_{\theta}(x^n) \log \frac{p_{\theta} (x^n)\sum_{x^n} 2^{-L(x^n)}}{2^{-L(x^n)}} \right]\label{eqn.kraft}\\
              &= \frac{1}{n} \left[  \sum_{x^n} p_{\theta}(x^n) \log \frac{p_{\theta} (x^n)}{q_L(x^n)}\right]\\
              &= \frac{1}{n} D ( p_{\theta}(x^n) || q_L(x^n)).
\end{align}
\end{proof}

We then use the strategy of lower bounding the minimax risk by Bayes risk, which is given by the following lemma. 
\begin{lemma}\label{lemma.Ibound} For any prior distribution $\Phi(\theta)$ supported on the parameter space $\mathcal{M}_{2,\mathrm{rev}}(k,\tau_{\mathrm{rel}})$, we have
\begin{equation}
R_n(k, \tau_{\mathrm{rel}}) \geq \frac{1}{n}I(\theta; X^n). 
\end{equation}
\end{lemma}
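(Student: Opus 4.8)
The plan is to combine the continuous approximation of \prettyref{lemma.contapproximation} with the classical reduction of minimax redundancy to Bayes redundancy, and then to lower bound the Bayes redundancy by a mutual information via the mixture distribution. This is the standard ``redundancy--capacity'' argument.

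First I would invoke \prettyref{lemma.contapproximation}: for every uniquely decodable code $L$ and every $\theta \in \mathcal{M}_{2,\mathrm{rev}}(k,\tau_{\mathrm{rel}})$ one has $r_n(L,\theta) \ge \frac1n D(p_\theta(x^n)\|q_L(x^n))$, where $q_L(x^n) = 2^{-L(x^n)}/\sum_{y^n} 2^{-L(y^n)}$; by the Kraft inequality (already used in the proof of \prettyref{lemma.contapproximation}) the normalizing constant lies in $(0,1]$, so $q_L$ is a bona fide probability distribution on $\cX^n$. Taking $\inf_L \sup_\theta$ of both sides gives $R_n(k,\tau_{\mathrm{rel}}) \ge \frac1n \inf_L \sup_{\theta \in \mathcal{M}_{2,\mathrm{rev}}(k,\tau_{\mathrm{rel}})} D(p_\theta(x^n)\|q_L(x^n))$.

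Next, since $\supp(\Phi) \subseteq \mathcal{M}_{2,\mathrm{rev}}(k,\tau_{\mathrm{rel}})$, for each fixed $L$ the supremum over $\theta$ dominates the $\Phi$-average, so $R_n(k,\tau_{\mathrm{rel}}) \ge \frac1n \inf_L \bE_{\theta\sim\Phi}\bigl[ D(p_\theta(x^n)\|q_L(x^n)) \bigr]$. Finally, letting $m(x^n) = \bE_{\theta\sim\Phi}[p_\theta(x^n)]$ denote the mixture (marginal) law of $X^n$, I would use the compensation identity $\bE_{\theta\sim\Phi}[D(p_\theta(x^n)\|q(x^n))] = I(\theta;X^n) + D(m(x^n)\|q(x^n))$, valid for any probability distribution $q$ on $\cX^n$ and obtained by splitting $\log\frac{p_\theta}{q} = \log\frac{p_\theta}{m} + \log\frac{m}{q}$ and averaging. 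Since $D(m\|q_L) \ge 0$, each term in the infimum over $L$ is at least $I(\theta;X^n)$, and dividing by $n$ yields the claim. Here $I(\theta;X^n)$ is understood under the joint law in which $\theta\sim\Phi$ and, conditionally on $\theta$, $X^n$ follows the stationary Markov chain with transition kernel $\theta$.

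I do not anticipate a genuine obstacle; the content is essentially textbook. The two points that need care are: (i) confirming that $q_L$ is a probability measure so that the KL divergence and the compensation identity are legitimate, which is precisely the Kraft-inequality step reused from \prettyref{lemma.contapproximation}; and (ii) the inequality ``$\sup_\theta \ge \bE_{\theta\sim\Phi}$'', which is valid exactly because the prior $\Phi$ is supported inside the constraint set $\mathcal{M}_{2,\mathrm{rev}}(k,\tau_{\mathrm{rel}})$. Note that for the lower bound we only need $D(m\|q_L)\ge 0$ and never need to exhibit a code whose induced $q_L$ equals $m$, so no realizability issue arises.
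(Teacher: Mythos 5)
Your argument is correct and is essentially identical to the paper's: invoke \prettyref{lemma.contapproximation}, bound the supremum by the $\Phi$-average, split $\log\frac{p_\theta}{q_L}$ through the mixture $m$, and drop the nonnegative $D(m\|q_L)$ term to obtain $I(\theta;X^n)$. The ``compensation identity'' you cite is exactly the algebraic decomposition the paper performs inline, so there is no substantive difference.
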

\begin{proof}
Let $p(x^n) = \int_{\theta} \Phi(\theta) p_{\theta}(x^n) d\theta$, then:
\begin{align}
R_n(k, \tau_{\mathrm{rel}}) &= \inf_{L} \sup_{K \in \mathcal{M}_{2,\mathrm{rev}}(k,\tau_{\mathrm{rel}})} r_n(L,\theta) \\
	&\geq \inf_{L} \int_{\theta} \Phi(\theta) r_n(L,\theta) d\theta \label{equation.prior}
\end{align}
Equation (\ref{equation.prior}) is true because the average is always lower than the supremum.
We next use the Lemma \ref{lemma.contapproximation} to obtain: 
\begin{align*}
 R_n(k, \tau_{\mathrm{rel}})  &\geq \inf_{q_L(x^n)} \frac{1}{n}\int_{\theta} \Phi(\theta) D(p_{\theta}(x^n) || q_L(x^n)) d\theta \\
    &= \inf_{q_L(x^n)} \frac{1}{n}\left[ \int_{\theta} \Phi(\theta) \sum_{x^n} p_{\theta}(x^n) \log \frac{p_{\theta}(x^n)} {q_L(x^n)} d\theta \right] \\
    &= \inf_{q_L(x^n)} \frac{1}{n}\left[ \int_{\theta} \Phi(\theta) \sum_{x^n} p_{\theta}(x^n) \log \frac{p_{\theta}(x^n) p(x^n)} {p(x^n)q_L(x^n)} d\theta \right] \\
    &= \inf_{q_L(x^n)} \frac{1}{n} \left[ \int_{\theta} \Phi(\theta) D(p_{\theta}(x^n) || p(x^n)) d\theta + D(p(x^n)||q_L(x^n)) \right]
\end{align*}
finally, the non-negativity of the KL-divergence, completes the proof:
\begin{align*}
 R_n(k, \tau_{\mathrm{rel}})  &\geq \frac{1}{n} \left[ \int_{\theta} \Phi(\theta) D(p_{\theta}(x^n) || p(x^n)) d\theta \right]\\
    &= \frac{1}{n}I(\theta;X^n)
\end{align*}
\end{proof}

Lemma~\ref{lemma.Ibound} suggests that, for any prior distribution $\Phi(\theta)$:
\begin{align}
R_n(k, \tau_{\mathrm{rel}}) &\geq \frac{1}{n}I(\theta;X^n) \\
							&= \frac{1}{n} [h(\theta) - h(\theta|X^n)].
\end{align}
In order to obtain a tight lower bound on $R_n(k,\tau_{\mathrm{rel}})$, it thus suffices to choose a prior on $\theta$ such that $h(\theta)$ is as large as possible, while $h(\theta|X^n)$, which quantifies how well we can estimate $\theta$ based on $X^n$, is as small as possible. 

The transition matrix has about $k^2$ degrees of freedom. In order to prove the lower bound corresponding to $n^*(k, \tau_{\mathrm{rel}}, \epsilon) \approx k^2$, we need nearly $k^2$ degrees of freedom in the prior construction, but would also like the Markov chain to mix fast under this prior. In other words, we want the Markov chain to be similar to the memoryless scenario. It naturally motivates a prior construction using random matrix theory. Indeed, if the transition matrix can be viewed as a combination of the rank one matrix corresponding to the stationary distribution and a ``noise'' matrix with nearly i.i.d. entries, it would be expected from random matrix theory that the second largest eigenvalue would be close to zero as the matrix size increases. However, the technical difficulty appears in constructing a prior which is \emph{completely} supported on $\mathcal{M}_{2,\mathrm{rev}}(k,\tau_{\mathrm{rel}})$ with desirable spectral properties and also in ensuring that the prior has large enough differential entropy. The concrete construction is below.

\subsection{Prior Construction}

Consider $\tilde{\mathcal{M}}_{2,\text{rev}}(k) \subset \mathcal{M}_{2,\text{rev}}(k)$ be the space of Markov distributions which have the following properties: 

\begin{equation}
\tilde{\mathcal{M}}_{2,\text{rev}}(k) = \{ K_{ij} \in \mathcal{M}_{2,\text{rev}}(k), K_{ii} = 0, \  \forall i \in [k] \} 
\end{equation}

The space $\tilde{\mathcal{M}}_{2,\text{rev}}(k)$ corresponds to transition matrices of random walks over undirected graphs that do not have self loops. We also define a class of stationary Markov chains $\tilde{\mathcal{M}}_{2,\text{rev}}(k, \tau_{\mathrm{rel}}) \subset \tilde{\mathcal{M}}_{2,\text{rev}}(k)$ as follows:
\begin{equation}
\tilde{\mathcal{M}}_{2,\text{rev}}(k, \tau_{\mathrm{rel}}) = \{ K_{ij} \in \tilde{\mathcal{M}}_{2,\text{rev}}(k), \tau_{\mathrm{rel}}(K) \leq \tau_{\mathrm{rel}}\}.
\end{equation}
In other words, we consider stationary reversible Markov chains in $\tilde{\mathcal{M}}_{2,\text{rev}}(k)$ whose relaxation time is upper-bounded by $\tau_{\mathrm{rel}}$. \\

\begin{definition}
Let $\pi(i,j) = \pi_i K_{ij}$ denote the stationary distribution over the tuples $(X_1,X_2)$. Then we can consider a parametrization $\theta$ for $\tilde{\mathcal{M}}_{2,\text{rev}}(k)$ as:
\begin{align*}
\theta &= (2\pi(1,2),\ldots,2\pi(1,k),2\pi(2,3),\ldots,2\pi(k-1,k)) \\
       &\equiv ({\theta}_{1,2},{\theta}_{1,3},\ldots,{\theta}_{1,k},{\theta}_{2,3},\ldots,{\theta}_{2,k},\ldots,{\theta}_{k-1,k})
\end{align*}
\end{definition}
The scaling by factor $2$ (e.g. $\theta_{1,2} = 2\pi(1,2)$) is considered to ensure that the sum of the parameters is $1$.
\begin{equation}
\sum_{i < j} \theta_{i,j} = 1
\end{equation}  
Note that, we can obtain the transition matrix $K$ from the parametrization $\theta$ as follows:\\
Let $\tilde{\theta}_{i,j}$ be defined as:
\begin{equation}
    \tilde{\theta}_{i,j}=
    \begin{cases}
      \theta_{i,j} &, \text{if}\ i<j \\
      0            &, \text{if}\ i=j \\
      \theta_{j,i} & ,\text{if}\ i>j
    \end{cases}
\end{equation}

Then, the transition matrix $K$ can be obtained as:
\begin{equation}
K_{ij} = \frac{\tilde{\theta}_{i,j}}{\sum_{j'} \tilde{\theta}_{i,j'}}
\end{equation}

We also define priors $\tilde{\Phi}^u(\theta)$ and $\tilde{\Phi}^u(\theta; {\tau_{\mathrm{rel}}})$ that are uniform distributions on spaces $\tilde{\mathcal{M}}_{2,\text{rev}}(k)$ and $\tilde{\mathcal{M}}_{2,\text{rev}}(k, \tau_{\mathrm{rel}})$, respectively, under the parametrization of $\theta$. We can obtain the distribution $\tilde{\Phi}^u(\theta)$ over the space $\tilde{\mathcal{M}}_{2,\text{rev}}(k)$, by considering transition matrices corresponding to random walk over undirected graphs with random weights. 

\begin{lemma}\label{lemma.exp}
Consider a simple complete graph (complete graph, without self loops) on $k$ vertices, with random weights $w_{ij}$ distributed i.i.d as $w_{ij} = w_{ji} \sim \mathsf{Exp}(1)$ ($w_{ii} = 0$). Then, the corresponding transition matrix $K$ is distributed as $\tilde{\Phi}^u(\theta)$, i.e. uniformly distributed over the  space $\tilde{\mathcal{M}}_{2,\text{rev}}(k)$.
\end{lemma}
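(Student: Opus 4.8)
The plan is to reduce the statement to the classical fact that coordinatewise normalization of a vector of i.i.d.\ $\mathsf{Exp}(1)$ variables is uniform on the probability simplex. The first step is to rewrite the parameter vector $\theta$ explicitly in terms of the random weights. For the random walk on the weighted graph we have $\pi_i = \rho_i/\rho$ and $K_{ij} = w_{ij}/\rho_i$, so
\[
\pi(i,j) = \pi_i K_{ij} = \frac{\rho_i}{\rho}\cdot\frac{w_{ij}}{\rho_i} = \frac{w_{ij}}{\rho}.
\]
Since the graph has no self-loops and $w$ is symmetric, $\rho = \sum_{i,j} w_{ij} = 2\sum_{i'<j'} w_{i'j'}$, hence
\[
\theta_{i,j} = 2\pi(i,j) = \frac{w_{ij}}{\sum_{i'<j'} w_{i'j'}}, \qquad i<j.
\]
Writing $m = \binom{k}{2}$ and collecting the upper-triangular weights into a vector $W = (w_{ij})_{i<j} \in \reals_{>0}^m$ of $m$ i.i.d.\ $\mathsf{Exp}(1)$ entries (a.s.\ all strictly positive), this says exactly $\theta = W/\sum_{\ell=1}^m W_\ell$. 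In particular $\theta$ depends on $W$ only through its direction, consistent with the fact that rescaling all edge weights leaves $K$ (and $\theta$) unchanged.

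Next I would pin down the domain. The map $\theta \mapsto K$ defined via $\tilde\theta$ and $K_{ij} = \tilde\theta_{i,j}/\sum_{j'}\tilde\theta_{i,j'}$ is a bijection from the relative interior of the simplex $\Delta_m = \{\theta \in \reals_{\ge 0}^m : \sum_{i<j}\theta_{i,j} = 1\}$ onto the transition matrices in $\tilde{\mathcal{M}}_{2,\text{rev}}(k)$ arising from graphs with all edges present (equivalently, all $\rho_i>0$); the complementary set, where some $w_{ij}=0$, is Lebesgue-null in $\Delta_m$ and is attained with probability zero under the exponential weights. Since $\tilde\Phi^u(\theta)$ is by definition the normalized Lebesgue measure on $\Delta_m$ in the $\theta$-coordinates, it suffices to show that $\theta = W/\sum_\ell W_\ell$ is uniformly distributed on $\Delta_m$.

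Finally I would run the standard change-of-variables computation. With $S = \sum_{\ell=1}^m W_\ell$, change variables from $(W_1,\dots,W_m)$ to $(\theta_1,\dots,\theta_{m-1},S)$ via $W_\ell = S\theta_\ell$ for $\ell<m$ and $W_m = S\bigl(1-\sum_{\ell<m}\theta_\ell\bigr)$; the Jacobian determinant equals $S^{m-1}$, so the density $\prod_{\ell} e^{-w_\ell} = e^{-S}$ transforms into $S^{m-1}e^{-S}$ on $\Delta_m \times (0,\infty)$. This factorizes as (a constant in $\theta$) times (a $\mathrm{Gamma}(m,1)$ density in $S$), whence $\theta$ is independent of $S$ and uniformly distributed on $\Delta_m$. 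Combined with the two preceding steps, this identifies the law of $K$ with $\tilde\Phi^u(\theta)$.

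I do not expect a serious obstacle here: the only points requiring care are the bookkeeping that $\theta$ genuinely ranges over a.e.\ of $\Delta_m$ and that the map $\theta\mapsto K$ is an a.e.\ bijection, the matching of the factor-$2$ normalization (which is exactly why $\sum_{i<j}\theta_{i,j}=1$), and the reading that "uniform over $\tilde{\mathcal{M}}_{2,\text{rev}}(k)$'' means uniform in the $\theta$-coordinates; all of these are immediate from the definitions introduced above.
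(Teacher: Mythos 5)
Your proposal is correct and follows essentially the same approach as the paper: both reduce the lemma to the classical fact that coordinatewise normalization of i.i.d.\ $\mathsf{Exp}$ random variables is uniform on the probability simplex, applied to the vector of upper-triangular weights (the paper cites this fact, while you prove it via the standard change of variables, and you are a bit more careful about the a.e.\ bijection $\theta \mapsto K$ and the interpretation of ``uniform'' as uniform in $\theta$-coordinates).
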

\begin{proof}
We recall a well-known property\cite{frigyik2010introduction} (Section 2.3) of exponential distributions: Let $U = \{u_1,u_2,\ldots,u_r\}$ be such that every $u_i \sim $ i.i.d $\mathsf{Exp}(\lambda)$. Then, for $u=\sum_{i}u_i$ and $v_i = {u_i/u}$, the vector $V = \{ v_1,v_2,\ldots,v_r\}$ is uniformly distributed over the probability simplex $\sum_{i=1}^r v_i = 1$.

Lemma \ref{lemma.exp} is a special case, and can be proved by considering:
\begin{align*}
U &= \{2w_{12},2w_{13},\ldots,2w_{1k},2w_{23},\ldots,2w_{2k},\ldots,2w_{k-1,k}\} \\
V &= \theta \equiv \{{\theta}_{1,2},{\theta}_{1,3},\ldots,{\theta}_{1,k},{\theta}_{2,3},\ldots,{\theta}_{2,k},\ldots,{\theta}_{k-1,k}\}
\end{align*}
Then $V = \theta$ is uniformly distributed over $\tilde{\mathcal{M}}_{2,\text{rev}}(k)$, the probability simplex of dimension $\frac{k(k-1)}{2}$.

\end{proof}

Lemma \ref{lemma.exp} provides a nice gateway to use tools from random matrix theory. We will first understand some properties of the weight matrix $W$, consisting of random weights $w_{ij} = w_{ji} \sim \mathsf{Exp}(1),  \mbox{  } \forall i\neq j$ and $w_{ii} = 0, \mbox{  } \forall i \in [k]$. Recall the following definition of a Wigner's Matrix \cite{tao2012topics}

\begin{definition}
We say a random symmetric matrix $A$ is a Wigner's matrix, if the upper-triangular entries $A_{ij}, i > j$ are distributed i.i.d with zero mean and unit variance, while the diagonal entries $A_{ii}$ are i.i.d. real variables with bounded mean and variance, distributed independently of the upper-triangular entries.
\end{definition}

Consider the matrix $\hat{W}$, where $\hat{w}_{ij} = w_{ij}-1$. $\hat{W}$ is a symmetric random matrix, where the off-diagonal entries are i.i.d. with $0$ mean, while the diagonal entries are constants. This implies that, the matrix $\hat{W}$ is a Wigner's random matrix. Then, the Strong Bai-Yin theorem, upper bound (Theorem 2.3.24, Exercise 2.3.15 of \cite{tao2012topics}) implies that the eigenvalues of $\hat{W}$ are bounded as:
\begin{equation}\label{eqn.wigner}
|\lambda_i(\hat{W})| \leq 2 \sqrt{k} + o(\sqrt{k}) \mbox{  a.s.,   } \forall i \in [k]
\end{equation} 
(here, by a.s. we mean that the sequence of events is true infinitely often as $k \rightarrow \infty$)\\
We can also bound the row sums $\rho_i$ of the matrix $W$ as:
\begin{lemma}\label{lemma.rho}
The following properties are true for the weight matrix $W$. 
\begin{align}
\max_{1\leq i \leq k} \left| \frac{\rho_i}{k} - 1 \right| &= o(1) \mbox{ a.s.}\label{equation.1}\\
\sum_{1\leq i \leq k} \left( \frac{\rho_i}{k} - 1 \right)^2 &= O(1) \mbox{ a.s.}\label{equation.2}
\end{align}
\end{lemma}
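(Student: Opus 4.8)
The plan is to reduce both statements to elementary concentration facts about $\rho_i=\sum_{j=1}^k w_{ij}$, which (since $w_{ii}=0$) is a sum of $k-1$ i.i.d.\ $\mathsf{Exp}(1)$ variables with mean $k-1$. Throughout I set $Z_i:=\rho_i-(k-1)$, a centered sum of $k-1$ i.i.d.\ $\mathsf{Exp}(1)$ variables; since $\rho_i-k=Z_i-1$, we have $\rho_i/k-1=(Z_i-1)/k$, so the additive constant $-1$ is lower order and it suffices to control $\max_i|Z_i|$ and $\sum_i Z_i^2$. As is already implicit in the use of the Bai--Yin theorem, all the $k\times k$ matrices $W$ are taken to live on one probability space, as principal submatrices of a single infinite symmetric array with i.i.d.\ $\mathsf{Exp}(1)$ off-diagonal entries, and ``a.s.'' refers to this space.

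For~(\ref{equation.1}): each $w_{ij}-1$ is a centered sub-exponential random variable, so Bernstein's inequality gives, for every fixed small $\epsilon>0$, a bound $\prob{\,|Z_i|>\epsilon(k-1)\,}\le 2\exp(-c\epsilon^2 k)$ with $c>0$ an absolute constant and $k$ large. A union bound over $i\in[k]$ then gives $\prob{\,\max_i|Z_i|>\epsilon(k-1)\,}\le 2k\exp(-c\epsilon^2 k)$, which is summable in $k$; Borel--Cantelli yields that almost surely $\max_i|Z_i|\le\epsilon k$ for all large $k$, hence $\limsup_k\max_i|\rho_i/k-1|\le\epsilon$ a.s. Intersecting over a sequence $\epsilon\downarrow 0$ gives $\max_i|\rho_i/k-1|=o(1)$ a.s.

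For~(\ref{equation.2}) I avoid any second-moment computation and instead recycle the spectral bound already established. Writing $\mathbf 1$ for the all-ones vector, the vector with entries $\rho_i-k$ is exactly $\hat W\mathbf 1$, since $(\hat W\mathbf 1)_i=\sum_{j=1}^k(w_{ij}-1)=\rho_i-k$. Therefore
\[
\sum_{i=1}^k\pth{\frac{\rho_i}{k}-1}^2=\frac{\|\hat W\mathbf 1\|_2^2}{k^2}\le\frac{\|\hat W\|_{\mathrm{op}}^2\,\|\mathbf 1\|_2^2}{k^2}=\frac{\|\hat W\|_{\mathrm{op}}^2}{k},
\]
and by~(\ref{eqn.wigner}) we have $\|\hat W\|_{\mathrm{op}}=\max_i|\lambda_i(\hat W)|\le 2\sqrt k+o(\sqrt k)$ a.s., so the right-hand side is $4+o(1)=O(1)$ a.s.

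The bound~(\ref{equation.2}) is essentially free given Bai--Yin, so the only substantive step is~(\ref{equation.1}); there I anticipate no genuine obstacle, the only point being that the per-coordinate deviation probability decays like $e^{-\Omega(k)}$ and thus comfortably survives the union bound over the $k$ coordinates. The sole bookkeeping is to pin down an explicit sub-exponential tail estimate for sums of exponentials (equivalently, one could truncate the $w_{ij}$ at level $\Theta(\log k)$ and apply McDiarmid's bounded-differences inequality), which is standard.
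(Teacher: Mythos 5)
Your argument for~(\ref{equation.2}) is essentially identical to the paper's: both observe that the vector with entries $\rho_i-k$ equals $\hat W\mathbf 1$ and then apply the Bai--Yin bound on $\|\hat W\|_{\mathrm{op}}$ from~(\ref{eqn.wigner}). For~(\ref{equation.1}), however, you take a genuinely different and more self-contained route. The paper does not argue concentration of the row sums from scratch; instead it introduces the auxiliary matrix $V$ obtained from $W$ by replacing the zero diagonal with fresh i.i.d.\ $\mathsf{Exp}(1)$ entries, invokes Lemma~2.3 of the Bordenave reference to obtain $\max_i|\rho_i/k-1|\le o(1)+\frac{1}{k}\max_i v_{ii}$ a.s., and then controls the extra term $\frac{1}{k}\max_i v_{ii}$ by the explicit computation $\prob{A_{k,\epsilon}^c}\le k e^{-k\epsilon}$ together with Borel--Cantelli. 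Your proof replaces the reliance on the external lemma with a direct Bernstein/sub-exponential tail bound on the centered row sum $Z_i=\rho_i-(k-1)$, a union bound over $i\in[k]$, Borel--Cantelli, and a final intersection over $\epsilon\downarrow 0$; this is correct, more elementary, and arguably cleaner because it does not require inventing fictitious diagonal entries. You also make explicit the coupling of all $W$'s on one probability space (nested principal submatrices of an infinite array), which the paper leaves implicit but is needed for any of the a.s.\ statements to be meaningful. One small remark: the paper's parenthetical gloss of ``a.s.''\ as ``true infinitely often as $k\to\infty$'' is a weaker (and somewhat nonstandard) reading; your Borel--Cantelli argument actually delivers the stronger and standard ``a.s.\ eventually'' statement, which is what the downstream use of $\rho_i=k(1+o(1))$ requires.
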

\begin{proof}
Let $V$ be a matrix so that $v_{ij} = w_{ij}, \forall i \neq j$ and $v_{ii} \sim \mathsf{Exp}(1), \forall i$. Then, using Lemma 2.3 of \cite{bordenave2010spectrum}, we get that:
\begin{equation}
\max_{1\leq i \leq k} \left| \frac{\rho_i}{k} - 1 \right| \leq  o(1) + \frac{1}{k} \max_{1\leq i \leq k} v_{ii} \mbox{ a.s.}
\end{equation}
Let $A_{k,\epsilon}$ be the event such that: 
\begin{equation}
A_{k,\epsilon} = \left\{ \frac{1}{k} \max_{1\leq i \leq k} v_{ii} \leq \epsilon \right\} 
\end{equation} 
As $v_{ii}$ are independent exponential random variables, we obtain:
\begin{align*}
P(A^c_{k,\epsilon}) &= 1 - P\left( \frac{1}{k} \max_{1\leq i \leq k } v_{ii} \leq \epsilon \right) \\
				  &= 1 - \prod_{i=1}^k P( v_{ii} \leq k \epsilon )\\
                  &= 1 - (1 - e^{-k\epsilon} )^k\\
				  &\leq ke^{-k \epsilon}
\end{align*} 

We can now obtain a bound on the sum of the probability of events: 
\begin{align*}
\sum_{k=1}^{\infty} P(A^c_{k,\epsilon}) &\leq \sum_{k=1}^{\infty}ke^{-k\epsilon} \\
									 &< \infty
\end{align*}

By using Borel-Cantelli lemma, this proves the equation (\ref{equation.1}). 

From equation (\ref{eqn.wigner}), we know that largest eigenvalue of $\hat{W}$ is bounded as:
\begin{equation}
|\lambda_1(\hat{W})| \leq 2 \sqrt{k} + o(\sqrt{k}) \mbox{ a.s}
\end{equation} 

Thus:
\begin{align*}
\sum_{1\leq i \leq k} \left( \frac{\rho_i}{k} - 1 \right)^2 &= \frac{\langle\hat{W}\textbf{1},\hat{W}\textbf{1}\rangle}{k^2}\\
														   &\leq \frac{\lambda_1(\hat{W})^2}{k}\\
														   &\leq \frac{4k}{k} + o(1) \mbox{ a.s.}\\
														   &= O(1) \mbox{ a.s.}
\end{align*}
This proves the equation (\ref{equation.2}).
\end{proof}
The lemma \ref{lemma.rho} essentially says that, every $\rho_i$ is close to $k$, i.e.:
\begin{equation}\label{equation.bound}
\rho_i \leq k(1 + \delta), \mbox{ where } \delta = o(1) 
\end{equation}

Equation (\ref{equation.bound}) then implies that that entries of the matrix $K$, are proportional to that of the matrix $W$. 
\begin{equation}\label{equation.o1}
K_{ij} = \frac{w_{ij}}{\rho_i} = \frac{w_{ij}}{k} (1+o(1))
\end{equation}

This suggests that the eigenvalues of matrix $K$ are close to those of $W/k$. 

\begin{lemma}\label{lemma.spectrum}
Let $W$ be a random matrix on $k$ vertices, with random weights $w_{ij}$ distribted i.i.d as $w_{ij} = w_{ji} \sim \mathsf{Exp}(1)$ ($w_{ii} = 0$). Then the corresponding transition matrix $K$ has the following spectral properties:
\begin{align}
\lambda_1(K) &= 1 \\
\max_{2 \leq i \leq k} |\lambda_i(K)| &\leq  \frac{2+c}{\sqrt{k}}  \mbox{  a.s}
\end{align}
for some constant $c >0$. (here, by "a.s." we mean that the sequence of events is true infinitely often as $k \rightarrow \infty$)
\end{lemma}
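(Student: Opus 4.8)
The plan is to transfer the Wigner-type eigenvalue bound \eqref{eqn.wigner} for $\hat W = W - (\mathbf{1}\mathbf{1}^{\rm T} - I)$ to a bound on the non-principal eigenvalues of the stochastic matrix $K = D^{-1}W$, where $D = \mathrm{diag}(\rho_1,\dots,\rho_k)$. The eigenvalue $\lambda_1(K) = 1$ is immediate: $K$ is a stochastic matrix, so $K\mathbf{1} = \mathbf{1}$, and since $K$ is similar to the symmetric matrix $D^{-1/2}WD^{-1/2}$ (by reversibility), all eigenvalues are real and $1$ is the largest in absolute value. The content is the second claim.

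First I would work with the symmetrized matrix $S := D^{-1/2} W D^{-1/2}$, which has the same spectrum as $K$. Write $W = (\mathbf{1}\mathbf{1}^{\rm T} - I) + \hat W$. The rank-one-ish piece $\mathbf{1}\mathbf{1}^{\rm T}$, after conjugation by $D^{-1/2}$, becomes $v v^{\rm T}$ with $v = D^{-1/2}\mathbf{1}$, i.e. $v_i = \rho_i^{-1/2}$; by Lemma \ref{lemma.rho}, $v_i = k^{-1/2}(1+o(1))$ uniformly, so $vv^{\rm T}$ is a rank-one matrix very close to $\frac1k \mathbf{1}\mathbf{1}^{\rm T}$, whose (unique) nonzero eigenvalue is $\|v\|^2 = \sum_i \rho_i^{-1} = 1 + o(1)$, and whose eigenvector is (close to) the Perron direction. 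The remaining pieces are $-D^{-1/2}D^{-1/2} = -D^{-1}$, which has operator norm $\max_i \rho_i^{-1} = O(1/k) = o(1)$, and $D^{-1/2}\hat W D^{-1/2}$, whose operator norm is at most $\max_i \rho_i^{-1} \cdot \|\hat W\|_{\mathrm{op}} \le \frac{1}{k(1-\delta)}\bigl(2\sqrt k + o(\sqrt k)\bigr) = \frac{2+o(1)}{\sqrt k}$ almost surely, using \eqref{eqn.wigner} and \eqref{equation.bound}. Thus $S = vv^{\rm T} + E$ where $\|E\|_{\mathrm{op}} \le \frac{2+o(1)}{\sqrt k}$ a.s.

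Now I would invoke Weyl's inequality (or the Courant–Fischer min-max characterization): for a rank-one perturbation $vv^{\rm T}$ plus $E$, all eigenvalues of $S$ except possibly the top one lie within $\|E\|_{\mathrm{op}}$ of the eigenvalues of $vv^{\rm T}$, which are $\{1+o(1), 0, \dots, 0\}$. More carefully, $\lambda_2(S) \le \lambda_2(vv^{\rm T}) + \|E\|_{\mathrm{op}} = 0 + \frac{2+o(1)}{\sqrt k}$, and $\lambda_k(S) \ge \lambda_k(vv^{\rm T}) - \|E\|_{\mathrm{op}} = -\frac{2+o(1)}{\sqrt k}$; hence $\max_{2\le i\le k}|\lambda_i(S)| \le \frac{2+o(1)}{\sqrt k}$, which for $k$ large enough is $\le \frac{2+c}{\sqrt k}$ for any fixed $c>0$. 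Since the eigenvalues of $K$ coincide with those of $S$, and since $\lambda_1(K)=1$ is attained by the Perron eigenvector while the bound just derived applies to the orthogonal complement, this yields the claim. One bookkeeping point to be careful about: I must make sure the $o(\sqrt k)$ error in the Bai–Yin bound and the $o(1)$ errors from Lemma \ref{lemma.rho} can be absorbed so that the final constant is genuinely $2+c$ for an arbitrarily small $c>0$ once $k\ge k_c$; this is where the ``infinitely often a.s.'' phrasing and the choice of $k_c$ depending on $c$ enter.

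The main obstacle is controlling the interaction between the Perron direction of $S$ and the top eigenvector of $vv^{\rm T}$: these are close but not identical (because $v$ is only approximately proportional to $\mathbf1$), so one cannot simply say ``peel off the rank-one part.'' The clean fix is to avoid eigenvector comparisons entirely and argue purely at the level of eigenvalues via min-max: $\lambda_2(S)$ and $\lambda_k(S)$ are pinned by Weyl's inequality regardless of eigenvector alignment, and the fact that $\lambda_1(K)=1$ is established independently from stochasticity. A secondary technical nuisance is that \eqref{eqn.wigner} and Lemma \ref{lemma.rho} are stated as almost-sure (infinitely-often) events; intersecting finitely many such events preserves the ``a.s.'' conclusion, so this is routine but should be stated.
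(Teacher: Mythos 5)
Your proof is correct, and it takes a genuinely different route from the paper's. The paper subtracts from $S$ the \emph{exact} rank-one Perron projection $P_{ij} = \sqrt{\rho_i\rho_j}/\rho$, so that $S - P$ has spectrum $\{\lambda_2(S),\dots,\lambda_k(S)\}\cup\{0\}$ by construction, and then invokes Lemma~2.4 of Bordenave et al.\ as a black box to bound $\|S-P\|_{\mathrm{op}} \le (2+o(1))/\sqrt k$. You instead decompose $W$ into its deterministic mean $\mathbf{1}\mathbf{1}^{\rm T} - I$ plus a Wigner matrix $\hat W$, conjugate by $D^{-1/2}$, and control the three resulting pieces directly: a rank-one term $vv^{\rm T}$ with $v = D^{-1/2}\mathbf{1}$, a negligible diagonal $-D^{-1}$, and a noise term $D^{-1/2}\hat W D^{-1/2}$ whose norm is bounded by Bai--Yin together with Lemma~\ref{lemma.rho}. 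Since your $v$ is \emph{not} the Perron eigenvector of $S$, you cannot peel off the rank-one part cleanly as the paper does, but you correctly observe that Weyl's inequality pins $\lambda_2(S)$ and $\lambda_k(S)$ regardless of eigenvector alignment: $\lambda_2(vv^{\rm T})=\lambda_k(vv^{\rm T})=0$, so both are within $\|E\|_{\mathrm{op}}$ of zero. What the paper's route buys is brevity via citation; what yours buys is self-containment — it only needs the Strong Bai--Yin bound \eqref{eqn.wigner} and the elementary row-sum control of Lemma~\ref{lemma.rho}, not the specialized result from \cite{bordenave2010spectrum}. One small bookkeeping note: your $\hat W := W - (\mathbf{1}\mathbf{1}^{\rm T}-I)$ differs from the paper's $\hat W := W - \mathbf{1}\mathbf{1}^{\rm T}$ by $I$ on the diagonal, but this only shifts the operator norm by at most $1 = o(\sqrt k)$, so the Bai--Yin bound transfers without change.
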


The proof for the lemma \ref{lemma.spectrum} follows from lemma \ref{lemma.rho}, and is proved in the Appendix \ref{appendix.randommatrix}. The following corollary is immediate. 

\begin{corollary}\label{corollary.trel}
Let  $\theta \in \tilde{\mathcal{M}}_{2,\text{rev}}(k)$ be distributed according to the prior $\tilde{\Phi}^u(\theta)$. Also, let $\tau^0_{\mathrm{rel}} = 1+ \frac{2+c}{\sqrt{k}}$, where $c>0$ is a positive constant. Then, 
\begin{equation}
P(\theta \in \tilde{\mathcal{M}}_{2,\text{rev}}(k, \tau^0_{\mathrm{rel}})) \rightarrow 1,
\end{equation}
as $k\to \infty$. 
\end{corollary}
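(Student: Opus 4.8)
The plan is to read the corollary off Lemma~\ref{lemma.spectrum} together with the elementary identities $\gamma^*(K) = 1 - \max_{2 \le i \le k}|\lambda_i(K)|$ and $\tau_{\mathrm{rel}}(K) = 1/\gamma^*(K)$. By Lemma~\ref{lemma.exp}, drawing $\theta \sim \tilde{\Phi}^u(\theta)$ is the same as forming the transition matrix $K = K(\theta)$ of the random walk on the complete graph with i.i.d.\ $\mathsf{Exp}(1)$ edge weights (and no self-loops), so Lemma~\ref{lemma.spectrum} applies verbatim to this $K$; since $K$ is reversible its spectrum is real and $\tau_{\mathrm{rel}}(K)$ is well defined.

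First I would rewrite the target event in spectral terms. We have $\tau_{\mathrm{rel}}(K) \le \tau^0_{\mathrm{rel}} = 1 + \frac{2+c}{\sqrt{k}}$ if and only if $\gamma^*(K) \ge \frac{1}{1 + (2+c)/\sqrt{k}} = \frac{\sqrt{k}}{\sqrt{k} + 2 + c}$, which is in turn equivalent to
\[
\max_{2 \le i \le k} |\lambda_i(K)| \;\le\; 1 - \frac{\sqrt{k}}{\sqrt{k} + 2 + c} \;=\; \frac{2+c}{\sqrt{k} + 2 + c}.
\]
Thus it suffices to show that this last inequality holds with probability tending to one as $k \to \infty$.

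Next, to bridge the small gap between the bound $\frac{2+c}{\sqrt{k}}$ supplied by Lemma~\ref{lemma.spectrum} and the slightly smaller bound $\frac{2+c}{\sqrt{k} + 2 + c}$ required above, I would apply Lemma~\ref{lemma.spectrum} with a strictly smaller constant $c' \in (0,c)$, say $c' = c/2$. The lemma then gives that, with probability tending to one, $\max_{2 \le i \le k}|\lambda_i(K)| \le \frac{2+c'}{\sqrt{k}}$. Since $\frac{2+c'}{\sqrt{k}} \le \frac{2+c}{\sqrt{k} + 2 + c}$ is equivalent to $(2+c')(2+c) \le (c - c')\sqrt{k}$, which holds for all $k$ larger than some $k_0 = k_0(c)$, on the intersection of that high-probability event with $\{k > k_0\}$ we get $\tau_{\mathrm{rel}}(K) \le \tau^0_{\mathrm{rel}}$, i.e.\ $\theta \in \tilde{\mathcal{M}}_{2,\text{rev}}(k,\tau^0_{\mathrm{rel}})$. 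Letting $k \to \infty$ yields $P(\theta \in \tilde{\mathcal{M}}_{2,\text{rev}}(k,\tau^0_{\mathrm{rel}})) \to 1$.

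There is essentially no obstacle here: all the real work was done in Lemmas~\ref{lemma.rho} and~\ref{lemma.spectrum}. The only mild point to be careful about is the discrepancy between the $\sqrt{k}$ in the denominator of the spectral bound and the $\sqrt{k} + 2 + c$ that appears upon inverting $\gamma^*$, which the choice $c' < c$ absorbs; and one should invoke Lemma~\ref{lemma.spectrum} in its ``with probability tending to one'' form (which is what its proof via the Bai--Yin bound and the row-sum concentration of Lemma~\ref{lemma.rho} actually delivers) rather than merely in an infinitely-often form.
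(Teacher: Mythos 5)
Your proof is correct and takes the same route the paper intends: the paper simply declares the corollary ``immediate'' from Lemma~\ref{lemma.spectrum}, and your write-up is the spelled-out version of that step. You are right to make explicit the small mismatch between the bound $\frac{2+c}{\sqrt{k}}$ supplied by Lemma~\ref{lemma.spectrum} and the threshold $\frac{2+c}{\sqrt{k}+2+c}$ that actually results from inverting $\tau_{\mathrm{rel}}=1/\gamma^*$ (fixed cleanly by invoking the lemma with a smaller $c'<c$, which is legitimate since its proof really gives $2+o(1)$ in the numerator), and also right to note that the corollary needs the ``probability tending to one'' reading of Lemma~\ref{lemma.spectrum}, not the ``infinitely often'' reading the paper's parenthetical suggests — both are real but small gaps in the paper's claim of immediacy.
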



From now on we denote $\tau^0_{\mathrm{rel}} = 1+ \frac{2+c}{\sqrt{k}}$. We next analyze $h(\theta)$ and $h(\theta|X^n)$ under the prior $\tilde{\Phi}^u(\theta; {\tau^0_{\mathrm{rel}}})$. 

\begin{lemma}\label{lemma.h_theta}
Let $\theta \sim \tilde{\Phi}^u(\theta; {\tau^0_{\mathrm{rel}}})$ and $\tau^0_{\mathrm{rel}} = 1+ \frac{2+c}{\sqrt{k}}$. Then, the differential entropy $h(\theta)$ is lower bounded as
\begin{equation*}
h(\theta) \geq \frac{k(k-1)}{2} \log \frac{2}{k(k-1)} + \frac{k(k-1) \log e}{2} - \log k
\end{equation*}
for $k\geq k_c$, where $k_c$ only depends on $c>0$. 
\end{lemma}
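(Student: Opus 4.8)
The plan is to compare the differential entropy of $\theta$ under the restricted prior $\tilde\Phi^u(\theta;\tau^0_{\mathrm{rel}})$ with that of the uniform distribution on the full simplex $\tilde{\mathcal M}_{2,\text{rev}}(k)$. By Lemma \ref{lemma.exp}, under $\tilde\Phi^u(\theta)$ the vector $\theta$ is uniform on the probability simplex of dimension $d \triangleq \frac{k(k-1)}{2}$, whose $(d-1)$-dimensional volume is $\frac{\sqrt{d+1}}{d!}$ (the standard formula for the volume of the regular simplex $\{\sum_{i=1}^{d+1}\theta_i=1,\ \theta_i\ge 0\}$). Hence the differential entropy of $\tilde\Phi^u(\theta)$ is exactly $\log\frac{\sqrt{d+1}}{d!} = -\log d! + \tfrac12\log(d+1)$. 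The restricted prior $\tilde\Phi^u(\theta;\tau^0_{\mathrm{rel}})$ is just this uniform distribution conditioned on the event $E = \{\theta\in\tilde{\mathcal M}_{2,\text{rev}}(k,\tau^0_{\mathrm{rel}})\}$, so it is uniform on a subset of the simplex, and its differential entropy is $\log\bigl(\mathrm{vol}(\text{simplex})\cdot \mathbb P(E)\bigr) = h(\tilde\Phi^u(\theta)) + \log \mathbb P(E)$, where $\mathbb P(E)$ is computed under $\tilde\Phi^u(\theta)$.

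Now I invoke Corollary \ref{corollary.trel}, which states $\mathbb P(E)\to 1$ as $k\to\infty$; in particular, there is a threshold $k_c$ depending only on $c$ such that $\mathbb P(E)\ge \tfrac{1}{k}$ for all $k\ge k_c$ (in fact $\mathbb P(E)\ge \tfrac12$ eventually, but $\tfrac1k$ suffices and matches the stated bound). Therefore
\begin{equation}
h(\theta) \ge h(\tilde\Phi^u(\theta)) - \log k = -\log d! + \tfrac12\log(d+1) - \log k \ge -\log d! - \log k,
\end{equation}
since $\log(d+1)\ge 0$. It remains to lower bound $-\log d!$ using Stirling: $\log d! \le d\log d - d\log e + \tfrac12\log(2\pi d) + O(1/d)$, hence $-\log d! \ge -d\log d + d\log e - \tfrac12\log(2\pi d) - O(1)$. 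Writing $d = \frac{k(k-1)}{2}$ gives $-d\log d = \frac{k(k-1)}{2}\log\frac{2}{k(k-1)}$ and $d\log e = \frac{k(k-1)\log e}{2}$, so
\begin{equation}
h(\theta) \ge \frac{k(k-1)}{2}\log\frac{2}{k(k-1)} + \frac{k(k-1)\log e}{2} - \tfrac12\log(2\pi d) - O(1) - \log k.
\end{equation}
Finally I would absorb the lower-order terms: $\tfrac12\log(2\pi d) + O(1) = O(\log k)$ is negligible against the main terms, and one checks that for $k$ large enough (depending only on $c$, since $k_c$ already does) the error is dominated so that the clean bound $h(\theta)\ge \frac{k(k-1)}{2}\log\frac{2}{k(k-1)} + \frac{k(k-1)\log e}{2} - \log k$ holds as stated — the $-\log k$ term in the claim has plenty of slack to absorb the $\tfrac12\log(2\pi d)$ and constant corrections once $k$ is moderately large.

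The only genuinely delicate point is the bookkeeping in the last step: the stated bound has a single $-\log k$ slack term, whereas the naive Stirling estimate produces an additional $-\tfrac12\log(2\pi d) = -\Theta(\log k)$ and the $\log\mathbb P(E)$ term contributes another $-\Theta(1)$ (or, if one is not careful, more). So the main obstacle is to verify that for $k\ge k_c$ these really do fit inside the $-\log k$ budget — i.e. that $\mathbb P(E)\ge \frac{2\pi d}{\text{const}}\cdot k^{-1}\cdot(\text{something})^{-1}$, which follows comfortably from $\mathbb P(E)\to 1$ — and to confirm that the constant $k_c$ can be taken to depend only on $c$, which it does because the convergence rate in Corollary \ref{corollary.trel} (equivalently Lemma \ref{lemma.spectrum}) depends only on $c$. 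Everything else is the exact simplex-volume computation plus Stirling.
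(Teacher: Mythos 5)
Your proposal follows the same route as the paper: both lower bound $h(\theta)$ by comparing the uniform distribution on the truncated simplex $\tilde{\mathcal M}_{2,\text{rev}}(k,\tau^0_{\mathrm{rel}})$ with that on the full simplex $\tilde{\mathcal M}_{2,\text{rev}}(k)$, use Corollary~\ref{corollary.trel} to control the ratio of volumes, and finish with Stirling. The paper phrases the volume step as $\mathsf{Vol}(\tilde{\mathcal M}_{2,\text{rev}}(k,\tau^0_{\mathrm{rel}}))\ge\frac12\mathsf{Vol}(\tilde{\mathcal M}_{2,\text{rev}}(k))$; your ``$\mathbb P(E)\ge 1/k$'' is the same idea with a weaker constant.

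However, your final bookkeeping does not close as written, and it fails precisely at the point you yourself flag. The single $-\log k$ slack term in the target has to absorb the Stirling correction $-\tfrac12\log(2\pi d)\approx -\log k+O(1)$. If you already spend the entire $-\log k$ on the crude bound $\mathbb P(E)\ge 1/k$, nothing is left: the required inequality reduces to $-\tfrac12\log(2\pi d)-O(1)\ge 0$, which is false for every $k$. Retaining the $\tfrac12\log(d+1)$ term rather than dropping it (as you do in the line ``$\ge -\log d!-\log k$'') cancels most of the Stirling error, but you are still short by a fixed positive constant, and again there is no remaining budget. The fix is what the paper actually uses: Corollary~\ref{corollary.trel} gives $\mathbb P(E)\to 1$, so for $k\ge k_c$ one may take $\mathbb P(E)\ge\tfrac12$, costing only $-1$ and leaving the full $-\log k$ to absorb the Stirling correction. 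You should also correct the simplex volume: $\theta$ has $d=\tfrac{k(k-1)}{2}$ coordinates summing to one, so it lives on a $(d-1)$-dimensional simplex whose projected Lebesgue volume is $\tfrac{1}{(d-1)!}$ (surface measure $\tfrac{\sqrt d}{(d-1)!}$), not $\tfrac{\sqrt{d+1}}{d!}$; the latter is the formula for a simplex with $d+1$ coordinates. Using $\tfrac{1}{(d-1)!}=\tfrac{d}{d!}$ supplies an extra $+\log d\approx +2\log k$ of slack, after which the inequality holds with room to spare for $k\ge k_c$. Your observation that $k_c$ depends only on $c$, via Corollary~\ref{corollary.trel}, is correct.
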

\begin{proof}
Corollary~\ref{corollary.trel} implies that there exists some $k_c$ such that for $k\geq k_c$, 
\begin{align*}
 \mathsf{Vol}(\tilde{\mathcal{M}}_{2,\text{rev}}(k, \tau^0_{\mathrm{rel}})) &\geq  \frac{1}{2}\mathsf{Vol}(\tilde{\mathcal{M}}_{2,\text{rev}}(k)) \\
 &= \frac{1}{2}\left( \frac{1}{\left[\frac{k(k-1)}{2}\right]!}\right)
\end{align*}

As the distribution $\tilde{\Phi}^u(\theta; {\tau^0_{\mathrm{rel}}})$ is uniform, we know
\begin{align*}
h(\theta) &= \log \mathsf{Vol}(\tilde{\mathcal{M}}_{2,\text{rev}}(k, \tau^0_{\mathrm{rel}}))\\
 &\geq \log \mathsf{Vol}(\tilde{\mathcal{M}}_{2,\text{rev}}(k)) - 1 \\
  &\geq \frac{k(k-1)}{2} \log \frac{2}{k(k-1)} + \frac{k(k-1) \log e}{2} - {\log k}
\end{align*}

We used Stirling approximation for factorial to simplify the bound on the entropy. 
\end{proof}

The next step in the proof is to upper bound the term $h(\theta|X^n)$, which quantifies how well we can estimate the parameter $\theta$ from $X^n$. Let $\hat{\theta} = \theta(X^n)$ be a deterministic estimator for the parameter $\theta$. Then, 
\begin{align}
h(\theta|X^n) &= h(\theta - \hat{\theta} | X^n) \\
			  &\leq h(\theta - \hat{\theta})
\end{align}

Utilizing the fact that Gaussian distribution maximizes the differential entropy under variance constraints, we have
\begin{align}
	h(\theta - \hat{\theta}) &\leq \sum_{i,j} h(\theta_{i,j} - \hat{\theta}_{i,j}) \\
  							&\leq \frac{1}{2}\sum_{i,j}  \left[ \log \left(2\pi e \mathsf{\mathsf{Var}}(\hat{\theta}_{i,j}) \right)  \right] \\
    							&\leq \frac{1}{2}\sum_{i,j}  \left[ \log \left(2\pi e \int_{\theta} \tilde{\Phi}^u(\theta; {\tau^0_{\mathrm{rel}}}) \mathsf{\mathsf{Var}}(\hat{\theta}_{i,j} | \theta) d\theta \right) \right]  \label{equation.eq10}
\end{align}

Let $N(i,j) = \sum_{r=1}^{n-1} \mathbb{1}[(X_r,X_{r+1}) = (i,j)]$ represent the number of occurrences of the tuple $(i,j)$ in the $X^n$ sequence. Then, a natural estimator for parameter $\theta_{i,j} = 2\pi(i,j) = \pi(i,j) + \pi(j,i)$ is the empirical estimator $\hat{\theta}_{i,j}$:
\begin{equation}
\hat{\theta}_{i,j} = \frac{N(i,j) + N(j,i)}{n-1}
\end{equation}

We prove the following bound on the variance of the empirical estimator $\hat{\theta}_{i,j}$.

\begin{lemma} \label{lemma.var}
Let $\theta \sim \tilde{\Phi}^u(\theta; {\tau^0_{\mathrm{rel}}})$.  Then the variance of the estimator $\hat{\theta}_{i,j} = \frac{N(i,j) + N(j,i)}{n-1}$ can be bounded as:

\begin{equation}
\mathsf{Var}(\hat{\theta}_{i,j} | \theta ) \leq \frac{8\theta_{i,j} \tau^0_{\mathrm{rel}} }{n-1}
\end{equation}
\end{lemma}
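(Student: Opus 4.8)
\emph{Proof plan.} The plan is to treat $N(i,j)+N(j,i)=\sum_{r=1}^{n-1}f(X_r,X_{r+1})$ as an additive functional of the stationary reversible chain, where $f(x,y)=\mathbb{1}[(x,y)=(i,j)]+\mathbb{1}[(x,y)=(j,i)]$ is a $\{0,1\}$-valued function (here $i\neq j$), and to control its variance through the spectral gap. Let $p(x,y)=\pi_xK_{xy}$ denote the stationary law of a consecutive pair $(X_r,X_{r+1})$. Since the chain is reversible, $\mathbb{E}_p[f]=\pi(i,j)+\pi(j,i)=2\pi(i,j)=\theta_{i,j}$, so $\hat\theta_{i,j}$ is unbiased, and moreover $\mathsf{Var}_p(f)=\theta_{i,j}(1-\theta_{i,j})\le\theta_{i,j}$. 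Writing $g=f-\theta_{i,j}$ and using stationarity, the variance decomposes into autocovariances:
\[
\mathsf{Var}\Big(\sum_{r=1}^{n-1}f(X_r,X_{r+1})\Big)=(n-1)\,c_0+2\sum_{\delta=1}^{n-2}(n-1-\delta)\,c_\delta\le(n-1)c_0+2(n-1)\sum_{\delta\ge1}|c_\delta|,
\]
where $c_0=\mathsf{Var}_p(f)$ and $c_\delta=\mathbb{E}[g(X_1,X_2)g(X_{1+\delta},X_{2+\delta})]$.

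The crux is a spectral expression for $c_\delta$. Conditioning on the intermediate state $X_2$ (when $\delta=1$) or on the pair $(X_2,X_{1+\delta})$ (when $\delta\ge2$) and invoking the Markov property, one obtains
\[
c_\delta=\langle u,\,K^{\delta-1}v\rangle_\pi,\qquad \delta\ge1,
\]
where $\langle\phi,\psi\rangle_\pi=\sum_a\pi_a\phi(a)\psi(a)$, $u(a)=\mathbb{E}[g(X_1,X_2)\mid X_2=a]$ and $v(a)=\mathbb{E}[g(X_1,X_2)\mid X_1=a]$. One checks directly that $\mathbb{E}_\pi[u]=\mathbb{E}_\pi[v]=\mathbb{E}_p[g]=0$, and Jensen's inequality gives $\|u\|_\pi^2,\,\|v\|_\pi^2\le\mathbb{E}_p[g^2]=c_0\le\theta_{i,j}$.

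Finally I would invoke the $L^2(\pi)$-contraction of a reversible kernel on $\pi$-mean-zero functions: since $\theta$ lies in the support of $\tilde\Phi^u(\theta;\tau^0_{\mathrm{rel}})$, we have $1-\gamma^*(K)\le 1-1/\tau^0_{\mathrm{rel}}$, hence $\|K^{\delta-1}v\|_\pi\le(1-\gamma^*(K))^{\delta-1}\|v\|_\pi$. By Cauchy--Schwarz and summing the geometric series,
\[
\sum_{\delta\ge1}|c_\delta|\le\|u\|_\pi\|v\|_\pi\sum_{\delta\ge1}(1-\gamma^*(K))^{\delta-1}=\frac{\|u\|_\pi\|v\|_\pi}{\gamma^*(K)}\le\tau^0_{\mathrm{rel}}\,\theta_{i,j}.
\]
Plugging back, $\mathsf{Var}(\sum_r f(X_r,X_{r+1}))\le(n-1)\theta_{i,j}(1+2\tau^0_{\mathrm{rel}})$; dividing by $(n-1)^2$ and using $\tau^0_{\mathrm{rel}}\ge1$ so that $1+2\tau^0_{\mathrm{rel}}\le3\tau^0_{\mathrm{rel}}\le8\tau^0_{\mathrm{rel}}$ yields the stated bound, in fact with a comfortable margin in the constant. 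The main obstacle is the bookkeeping in the identity $c_\delta=\langle u,K^{\delta-1}v\rangle_\pi$: separating the $\delta=1$ and $\delta\ge2$ cases correctly, verifying the underlying conditional-independence statements from the Markov property, and confirming that $u$ and $v$ are $\pi$-mean-zero so that the reversible contraction applies.
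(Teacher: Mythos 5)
Your proof is correct, and it takes a genuinely different and more self-contained route than the paper's. The paper lifts the problem to the chain over consecutive tuples $(X_r,X_{r+1})$, invokes a black-box variance bound of Paulin (his Theorem 3.7) in terms of the pseudo-spectral gap $\gamma_{ps}(\tilde K)$ of that (non-reversible) tuple chain, and then proves a separate lemma that $\gamma_{ps}(\tilde K)\ge\gamma^*(K)/2$, arriving at the constant $8=4\cdot2$. You instead stay on the original state space: you expand the variance of the additive functional into autocovariances, derive the clean identity $c_\delta=\langle u,K^{\delta-1}v\rangle_\pi$ with $u(a)=\mathbb{E}[g\mid X_2=a]$ and $v(a)=\mathbb{E}[g\mid X_1=a]$ (both $\pi$-mean-zero, both with $\|\cdot\|_\pi^2\le\theta_{i,j}$ by Jensen), and then use Cauchy--Schwarz together with the $L^2(\pi)$-contraction $\|K^{\delta-1}v\|_\pi\le(1-\gamma^*)^{\delta-1}\|v\|_\pi$ of the reversible kernel on mean-zero functions, summing the geometric series to get $\sum_\delta|c_\delta|\le\tau_{\mathrm{rel}}\theta_{i,j}$. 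The identity for $c_\delta$ is correct: for $\delta\ge2$, conditioning on $(X_2,X_{1+\delta})$ separates $X_1$ from $X_{2+\delta}$ by the Markov property, and for $\delta=1$ conditioning on $X_2$ alone separates $X_1$ from $X_3$, yielding $c_1=\langle u,v\rangle_\pi$ consistent with the $K^0$ case. Your route is more elementary (no appeal to the pseudo-spectral-gap machinery or to the auxiliary lemma comparing $\gamma_{ps}(\tilde K)$ with $\gamma^*(K)$) and in fact produces the sharper constant $1+2\tau^0_{\mathrm{rel}}\le3\tau^0_{\mathrm{rel}}$ before you relax it to $8\tau^0_{\mathrm{rel}}$ to match the statement; the paper's approach, on the other hand, would extend more readily to settings where only mixing-time control of the tuple chain is available rather than reversibility of the base chain.
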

\begin{proof}
Let $X_1 \rightarrow X_2 \ldots \rightarrow X_n$ be a reversible Markov chain with transition matrix $K \in \tilde{\mathcal{M}}_{2,\text{rev}}(k, \tau^0_{\mathrm{rel}})$. Consider the Markov chain over the tuples $(X_1,X_2) \rightarrow (X_2,X_3) \ldots \rightarrow (X_{n-1},X_n)$. Let $\tilde{K}$ be the corresponding transition matrix. It is interesting to note that, although the original Markov chain is reversible, the chain over the tuples is generally not reversible. Let:
\begin{equation*}
f_{i,j}(X_{r-1},X_r) = \frac{\mathbb{1}[(X_{r-1},X_r)= (i,j)] + \mathbb{1}[(X_{r-1},X_r)= (j,i)]}{n-1}
\end{equation*}

then, the estimator $\hat{\theta}_{i,j}$ can be written as:
\begin{equation*}
\hat{\theta}_{i,j} = \sum_{r=1}^{n-1} f_{i,j}(X_{r-1},X_r) 
\end{equation*}

We can now use Theorem 3.7 of \cite{paulin2015concentration} on the function $f_{i,j}(X_{r-1},X_r)$ corresponding to the Markov chain over the tuples $(X_1,X_2) \rightarrow (X_2,X_3) \ldots \rightarrow (X_{n-1},X_n)$, to obtain:  
\begin{align*}
\mathsf{Var}(\hat{\theta}_{i,j} | \theta ) &\leq \frac{4 \theta_{i,j}}{\gamma_{ps}(\tilde{K}) (n-1)}
\end{align*}

We next use the lemma \ref{lemma.gammaps} (proved in the Appendix \ref{appendix.varproof}) to bound $\gamma_{ps}(\tilde{K})$ in terms of $\tau_{\mathrm{rel}}(K)$:
\begin{lemma}\label{lemma.gammaps}
Let $X_1 \rightarrow X_2 \ldots \rightarrow X_n$ be a reversible Markov chain with transition matrix $K$, then the Markov chain over tuples, $(X_1,X_2) \rightarrow (X_2,X_3) \ldots \rightarrow (X_{n-1},X_n)$ has pseudo-spectral gap $\gamma_{ps}(\tilde{K})$ given by:
\begin{equation}
\gamma_{ps}(\tilde{K}) \geq \frac{\gamma^*(K)}{2}
\end{equation}
\end{lemma}

Using lemma \ref{lemma.gammaps}, we obtain the variance bound:

\begin{align*}
\mathsf{Var}(\hat{\theta}_{i,j} | \theta ) &\leq \frac{8 \theta_{i,j}}{\gamma^*(K) (n-1)}\\
										  &= \frac{8 \theta_{i,j} \tau_{rel}(K)}{(n-1)} \\
										  &\leq \frac{8 \theta_{i,j} \tau^0_{rel}(K)}{(n-1)} 
\end{align*}
which proves the lemma.
\end{proof}

We next use the variance bound on the estimator to obtain an upper bound on $h(\theta|X^n)$.

\begin{lemma}\label{lemma.h_theta_xn}
Let $\theta \sim \tilde{\Phi}^u(\theta; {\tau^0_{\mathrm{rel}}})$, then the conditional differential entropy $h(\theta|X^n)$ is upper bounded by:

\begin{equation}
h(\theta|X^n) \leq \frac{k(k-1)}{4} \log \frac{16\pi e {\tau^0_{\mathrm{rel}}} }{n-1} + \frac{k(k-1)}{4}\log \frac{2}{k(k-1)}	
\end{equation}
\end{lemma}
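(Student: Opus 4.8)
The plan is to combine the chain of inequalities already set up before the statement of Lemma~\ref{lemma.h_theta_xn} with the variance bound of Lemma~\ref{lemma.var}. Recall that we have already shown
\[
h(\theta\mid X^n)\le \frac12\sum_{i<j}\log\!\left(2\pi e\int_\theta \tilde\Phi^u(\theta;\tau^0_{\mathrm{rel}})\,\mathsf{Var}(\hat\theta_{i,j}\mid\theta)\,d\theta\right),
\]
so the work is entirely to substitute $\mathsf{Var}(\hat\theta_{i,j}\mid\theta)\le \frac{8\theta_{i,j}\tau^0_{\mathrm{rel}}}{n-1}$ into this expression and simplify. First I would pull the constant $\frac{8\tau^0_{\mathrm{rel}}}{n-1}$ out of the integral, leaving $\int_\theta \tilde\Phi^u(\theta;\tau^0_{\mathrm{rel}})\,\theta_{i,j}\,d\theta=\Expect[\theta_{i,j}]$, the prior mean of the coordinate $\theta_{i,j}$.

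Next I would compute $\Expect[\theta_{i,j}]$. By symmetry of the uniform prior $\tilde\Phi^u(\theta;\tau^0_{\mathrm{rel}})$ over the (truncated) simplex $\{\sum_{i<j}\theta_{i,j}=1\}$ — here symmetry across coordinates holds because the relaxation-time constraint is invariant under relabeling vertices of the complete graph — each of the $\binom k2=\frac{k(k-1)}{2}$ coordinates has the same mean, and they sum to $1$, so $\Expect[\theta_{i,j}]=\frac{2}{k(k-1)}$. Strictly, the truncation to $\tilde{\mathcal M}_{2,\text{rev}}(k,\tau^0_{\mathrm{rel}})$ preserves the permutation symmetry of the vertex set, hence of the edge set, so the argument still gives equal marginals summing to one; this is the one small point that needs to be stated cleanly.

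Putting these together, each summand becomes $\frac12\log\!\left(\frac{16\pi e\,\tau^0_{\mathrm{rel}}}{(n-1)}\cdot\frac{2}{k(k-1)}\right)$, and summing over the $\frac{k(k-1)}{2}$ pairs $i<j$ gives
\[
h(\theta\mid X^n)\le \frac{k(k-1)}{4}\log\frac{16\pi e\,\tau^0_{\mathrm{rel}}}{n-1}+\frac{k(k-1)}{4}\log\frac{2}{k(k-1)},
\]
which is exactly the claimed bound. I do not anticipate a genuine obstacle here — the proof is a short computation — but the step requiring the most care is justifying $\Expect[\theta_{i,j}]=\frac{2}{k(k-1)}$ under the \emph{truncated} prior rather than the full uniform prior on the simplex; one must note that the set $\tilde{\mathcal M}_{2,\text{rev}}(k,\tau^0_{\mathrm{rel}})$ is invariant under the $S_k$ action permuting states, so all edge-coordinates remain exchangeable and the mean is forced by the constraint $\sum_{i<j}\theta_{i,j}=1$.
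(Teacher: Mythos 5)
Your proof is correct and reaches the same bound, but it takes a genuinely different route through the final step. After substituting $\mathsf{Var}(\hat\theta_{i,j}\mid\theta)\le \tfrac{8\theta_{i,j}\tau^0_{\mathrm{rel}}}{n-1}$ and pulling out the constant, one is left with the term $\tfrac12\sum_{i<j}\log\Expect[\theta_{i,j}]$. The paper does not evaluate $\Expect[\theta_{i,j}]$ at all; it applies Jensen's inequality to the concave $\log$, bounding
\[
\frac12\sum_{i<j}\log\Expect[\theta_{i,j}]\ \le\ \frac{k(k-1)}{4}\log\!\Bigl(\tfrac{2}{k(k-1)}\sum_{i<j}\Expect[\theta_{i,j}]\Bigr)\ =\ \frac{k(k-1)}{4}\log\tfrac{2}{k(k-1)},
\]
using only $\sum_{i<j}\theta_{i,j}=1$. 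You instead compute each marginal exactly, $\Expect[\theta_{i,j}]=\tfrac{2}{k(k-1)}$, appealing to the $S_k$-invariance of $\tilde{\mathcal M}_{2,\mathrm{rev}}(k,\tau^0_{\mathrm{rel}})$ (which holds because the spectrum of $K$, hence $\gamma^*(K)$ and $\tau_{\mathrm{rel}}(K)$, is invariant under conjugation by permutation matrices, and the constraint $K_{ii}=0$ is likewise permutation-invariant). The two arguments yield the same numerical bound here precisely because the symmetry you invoke makes the Jensen step tight. The paper's approach is slightly more robust in that it needs no distributional information beyond the simplex constraint; yours needs the exchangeability justification you correctly flag, but it has the minor virtue of computing the exact marginal mean rather than merely bounding the sum.
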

\begin{proof}

From equation (\ref{equation.eq10}), and lemma \ref{lemma.gammaps} we obtain:

\begin{align}
	h(\theta|X^n) &\leq \frac{1}{2}\sum_{i,j}  \left[ \log \left(2\pi e \int_{\theta} \tilde{\Phi}^u(\theta; {\tau^0_{\mathrm{rel}}})\mathsf{Var}(\hat{\theta}_{i,j} | \theta) d\theta \right)  \right] \\
				  &\leq \frac{1}{2}\sum_{i,j}  \left[ \log \left(2\pi e \int_{\theta} \tilde{\Phi}^u(\theta; {\tau^0_{\mathrm{rel}}}) \frac{8\theta_{i,j} \tau^0_{\mathrm{rel}} }{n-1} d\theta  \right)  \right] \\
				  &= \frac{k(k-1)}{4} \log \frac{16\pi e {\tau^0_{\mathrm{rel}}} }{n-1} + \frac{1}{2} \sum_{i,j}  \left[ \log \int_{\theta} \tilde{\Phi}^u(\theta; {\tau^0_{\mathrm{rel}}}) \theta_{i,j} d\theta \right]  \\
				  &\leq \frac{k(k-1)}{4} \log \frac{16\pi e {\tau^0_{\mathrm{rel}}} }{n-1} \\ 
				  &\quad+ \frac{k(k-1)}{4} \left[ \log \sum_{i,j} \frac{2}{k(k-1)} \int_{\theta} \tilde{\Phi}^u(\theta; {\tau^0_{\mathrm{rel}}}) \theta_{i,j} d\theta  \right] \\
				 &= \frac{k(k-1)}{4} \log \frac{16\pi e {\tau^0_{\mathrm{rel}}} }{n-1} + \frac{k(k-1)}{4}\log \frac{2}{k(k-1)}	\label{equation.65}				  
\end{align}
Equation (\ref{equation.65}) is true because $\sum_{i,j} \theta_{i,j} = 1$. This proves the upper bound on the term $h(\theta|X^n)$.
\end{proof}

\subsection{Proof of Theorem \ref{thm.main}}

Using lemma~\ref{lemma.Ibound} for the prior $\tilde{\Phi}^u(\theta; {\tau^0_{\mathrm{rel}}})$, and lemma~\ref{lemma.h_theta},~\ref{lemma.h_theta_xn}, we have
\begin{align}
R_n(k, \tau_{\mathrm{rel}}) &\geq \frac{1}{n}I(\theta;X^n) \\
							&= \frac{1}{n} [h(\theta) - h(\theta|X^n)] \\
							&\geq \frac{k(k-1)}{4n}\log \frac{2(n-1)}{k(k-1)} + \frac{k(k-1)}{4n} \log \frac{e}{16\pi {\tau^0_{\mathrm{rel}}}} - \frac{\log k}{n}				
\end{align}
for $k\geq k_c$, where $\tau_{\mathrm{rel}}^0 = 1 + \frac{2+c}{\sqrt{k}}$. 
This completes the proof of the lower bound.

\section{Theorem \ref{thm.main2} Proof}\label{proof.thm.main2}
For any sequence $x^n$ over the alphabet $\mathcal{X} = [k]$, let $N(a), N(a,b)$ be defined as:
\begin{align}
N(a) &= \sum_{i=1}^{n-1} \mathbb{1}[x_i = a]\\
N(a,b) &= \sum_{i=1}^{n-1} \mathbb{1}[(x_i,x_{i+1}) = (a,b)]
\end{align}

Before we prove the theorem, we consider some simple lemmas.
\begin{lemma}\label{lemma.universalprefix}
There exists a prefix code \cite{cover2012elements} (Section 5.1) on non-negative integers $\mathbb{N} \cup \{0\} = \{0,1,2,\ldots\}$, such that every integer $m$ has a codeword of length $l_m \leq 2\log_2 (m+1) + 1$.
\end{lemma}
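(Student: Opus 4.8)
The plan is to exhibit an explicit code — essentially the Elias $\gamma$-type encoding applied to $m+1$ — and then verify the prefix property and the length bound directly. First I would map each non-negative integer $m$ to the positive integer $m+1$, and let $\ell(m) = \lfloor \log_2(m+1)\rfloor + 1$ denote the number of bits in the binary expansion of $m+1$ (so its leading bit is always $1$). The codeword assigned to $m$ is then $\ell(m)-1$ zeros followed by the $\ell(m)$-bit binary expansion of $m+1$. Its total length is $l_m = 2\ell(m) - 1 = 2\lfloor \log_2(m+1)\rfloor + 1 \le 2\log_2(m+1) + 1$, which is exactly the claimed bound; note that the shift from $m$ to $m+1$ is what lets this handle $m=0$ (whose codeword is the single bit $1$).

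Next I would check the prefix property. Given a concatenation of such codewords, a decoder reads the maximal initial run of zeros; if that run has length $t$, it knows the current codeword consists of $t+1$ further bits, the first of which is the separating $1$; it reads exactly those $t+1$ bits, reads off the integer $N\ge 1$ they represent in binary, and outputs $m=N-1$. Since the length of each codeword is determined by its own leading run of zeros, no codeword can be a prefix of another, so the code is prefix-free, hence uniquely decodable.

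As an alternative (and perhaps cleaner) route, I would note that one can bypass the explicit construction: take $l_m = 2\lfloor \log_2(m+1)\rfloor + 1$ and observe that precisely $2^j$ values of $m$ satisfy $\lfloor \log_2(m+1)\rfloor = j$ (namely those with $m+1 \in \{2^j,\dots,2^{j+1}-1\}$), so $\sum_{m\ge 0} 2^{-l_m} = \sum_{j\ge 0} 2^j\, 2^{-(2j+1)} = \sum_{j\ge 0} 2^{-(j+1)} = 1$. The converse direction of Kraft's inequality \cite{cover2012elements} (Theorem 5.5.1) then yields a prefix code with exactly these lengths. Either way the lemma follows; there is no substantial obstacle, only the bookkeeping around the shift by one and the floor in $\ell(m)$.
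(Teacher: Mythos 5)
Your explicit construction is exactly the paper's code: writing $q=\lfloor\log_2(m+1)\rfloor$, the paper emits $q$ zeros, a separator $1$, then the $q$ low-order bits of $m+1$, which is precisely ``$\ell(m)-1$ zeros followed by the $\ell(m)$-bit binary expansion of $m+1$'' in your notation, so the first half of your argument coincides with the paper's proof (the Elias~$\gamma$ code applied to $m+1$), and the length bookkeeping and prefix-decoding check are both right. Your alternative via the converse of Kraft's inequality with $\sum_{m\ge 0} 2^{-l_m}=1$ is a clean non-constructive shortcut the paper does not take, but the main route is the same.
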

\begin{proof}
Let $q$ be such that: $2^q \leq (m+1) < 2^{q+1}$. Thus, $(m+1)$ can be written as:
\begin{equation}
(m+1) = 2^q + r
\end{equation}
where, $0 \leq r < 2^q$. Let $U_q = 000\ldots001$ be a unary code with $q$ zeros, and $B_r$ be the binary representation of $r$ using $q$ bits. Then the following code $C_m$ is prefix-free:

\begin{equation}
C_m = U_q 1 B_r
\end{equation}
Thus, the length of the code $l(C_m)$:
\begin{align}
l(C_m) &= 2q + 1 \\
       &\leq 2\log_2 (m+1) + 1
\end{align} This completes the proof.
\end{proof}

\begin{lemma} We can store the parameters $N(a),N(a,b), \forall a,b \in [k]$ for a sequence $x^n$ using $L_{param}$ number of bits, which is upper bounded as:

\begin{equation}
L_{param}(x^n) \leq 2k^2\log_2 \left( \frac{n}{k^2}  + 1\right) + k^2
\end{equation}
\end{lemma}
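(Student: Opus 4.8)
The plan is to bound the number of bits needed to encode the $2k^2$ nonnegative integers $\{N(a)\}_{a\in[k]}$ and $\{N(a,b)\}_{a,b\in[k]}$ by concatenating, for each of them, the prefix-free codeword supplied by Lemma~\ref{lemma.universalprefix}; since a concatenation of prefix-free codewords is itself uniquely decodable, this yields a valid encoding of the parameter vector. Summing the per-integer length bound $l_m\le 2\log_2(m+1)+1$ over all $k^2$ bigram counts and the $k$ unigram counts (padding the latter set to $k^2$ terms, which only loosens the bound), we get
\begin{equation}
L_{\mathrm{param}}(x^n)\le \sum_{a,b} \bigl(2\log_2(N(a,b)+1)+1\bigr) + \sum_{a}\bigl(2\log_2(N(a)+1)+1\bigr) \le 2\sum_{a,b}\log_2(N(a,b)+1) + 2\sum_a \log_2(N(a)+1) + 2k^2 .
\end{equation}

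The key step is then to control $\sum_{a,b}\log_2(N(a,b)+1)$. Here I would use concavity of $\log_2$ together with Jensen's inequality: since there are $k^2$ pairs $(a,b)$ and $\sum_{a,b} N(a,b) = n-1 \le n$, the average of the $N(a,b)$ is at most $n/k^2$, so
\begin{equation}
\sum_{a,b}\log_2(N(a,b)+1) \le k^2 \log_2\!\left(\frac{1}{k^2}\sum_{a,b} N(a,b) + 1\right) \le k^2 \log_2\!\left(\frac{n}{k^2}+1\right).
\end{equation}
The same argument applied to the unigram counts (whose sum is also at most $n$, but spread over only $k$ symbols) gives $\sum_a \log_2(N(a)+1) \le k\log_2(n/k+1) \le k^2\log_2(n/k^2+1)$ for the regime of interest, or one can simply absorb this smaller term; combining the two bounds yields $L_{\mathrm{param}}(x^n)\le 4k^2\log_2(n/k^2+1)+2k^2$, which is slightly weaker than the claimed constant but of the right form. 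To hit the stated constants $2k^2\log_2(n/k^2+1)+k^2$ exactly, I would instead note that the unigram counts are redundant given the bigram counts up to boundary effects — in fact $N(a) = \sum_b N(a,b)$ except for the last symbol — so it suffices to store only the $k^2$ bigram counts (plus $O(\log k)$ bits to record the final symbol, which gets absorbed elsewhere in Theorem~\ref{thm.main2}); this removes the factor of $2$ and gives $L_{\mathrm{param}}(x^n)\le 2k^2\log_2(n/k^2+1)+k^2$.

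The main obstacle is the bookkeeping around which counts actually need to be transmitted: naively storing both unigram and bigram counts doubles the leading constant, so the argument must observe that the unigram counts are (almost) determined by the bigram counts, leaving only a negligible boundary correction. Once that observation is in place, the rest is the routine Jensen/concavity estimate above, and the prefix-free concatenation guarantees unique decodability so that the bound feeds correctly into the two-part code construction used to prove Theorem~\ref{thm.main2}.
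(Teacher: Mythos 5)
Your final route matches the paper's: store only the bigram counts $N(a,b)$, encode each with the prefix-free code of Lemma~\ref{lemma.universalprefix}, and apply Jensen's inequality (concavity of $\log_2$) using $\sum_{a,b} N(a,b) = n-1 \le n$ to get the stated bound. One small imprecision worth noting: there is no boundary correction or need to transmit the final symbol, because the paper defines $N(a) = \sum_{i=1}^{n-1}\mathbb{1}[x_i = a]$ (the sum stops at $n-1$, not $n$), so $N(a) = \sum_b N(a,b)$ holds \emph{exactly}; your detour through the $4k^2\log_2(n/k^2+1)+2k^2$ bound and the discussion of an ``almost'' determination are therefore unnecessary, and the paper simply observes that $N(a)$ is derived and proceeds directly.
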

\begin{proof}
Note that, we only need to store $N(a,b), \forall a,b \in [k]$ as the parameters $N(a)$ can be derived. Using the prefix coding from lemma \ref{lemma.universalprefix} for the parameters $N(a,b)$:
\begin{align}
L_{param}(x^n) &\leq \sum_{a,b} [2\log_2(N(a,b) + 1) + 1] \\
          &= 2k^2\sum_{a,b} \frac{1}{k^2}\log_2(N(a,b) + 1) + k^2\\
          &\leq 2k^2\log_2 \left( \frac{1}{k^2} \sum_{a,b} (N(a,b) + 1)\right) + k^2\label{eqn.concavitylog}\\
          &= 2k^2\log_2 \left( \frac{n}{k^2}  + 1\right) + k^2
\end{align}
Equation (\ref{eqn.concavitylog}) is true due to concavity of the $\log$ function and the Jensen's inequality.
\end{proof}

\begin{lemma}\label{lemma.arithmetic_coding} We can use arithmetic coding \cite{witten1987arithmetic} to encode a sequence $x^n$ using $L_{seq}(x^n)$ bits, which is bounded as:
\begin{equation}
L_{seq}(x^n) \leq \log_2 k + (n-1)H_1(x^n) + 3
\end{equation}
where $H_1(x^n)$ is the $1^{st}$ order empirical entropy of sequence $x^n$:
\begin{align}
H_1(x^n) &= \sum_{a=1}^k \sum_{b=1}^k \frac{N(a,b)}{n-1} \log_2 \frac{N(a)}{N(a,b)}
\end{align}
\end{lemma}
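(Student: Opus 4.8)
The plan is to encode the sequence by running arithmetic coding against the plug-in (empirical) Markov model determined by the transition counts. The decoder already possesses all of $N(a,b)$, $a,b\in[k]$ (these are exactly what the $L_{param}$ bits store), and from them it recovers $N(a)=\sum_{b}N(a,b)$ as well as $n-1=\sum_{a,b}N(a,b)$; so both encoder and decoder can form the empirical conditional kernel $\hat{K}(b\mid a)=N(a,b)/N(a)$ for every state $a$ with $N(a)>0$, and we set $\hat{K}(\cdot\mid a)$ to be, say, uniform on $[k]$ for the (irrelevant) states with $N(a)=0$. Take as coding distribution
\begin{equation*}
Q(y^n)=\frac{1}{k}\prod_{i=1}^{n-1}\hat{K}(y_{i+1}\mid y_i),
\end{equation*}
the law on $\mathcal{X}^n$ of a Markov chain with uniform initial state and transition matrix $\hat{K}$; this is a genuine probability distribution, agreed upon by both parties.

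Next I would note two things. First, $Q(x^n)>0$: any pair $(a,b)$ occurring as some $(x_i,x_{i+1})$ has $N(a,b)\ge 1$, hence $N(a)\ge 1$, so each factor $\hat{K}(x_{i+1}\mid x_i)=N(x_i,x_{i+1})/N(x_i)$ is strictly positive. Second, arithmetic coding \cite{witten1987arithmetic} against a known distribution $Q$ produces a prefix-free codeword for $x^n$ of length at most $\lceil\log_2(1/Q(x^n))\rceil+1\le\log_2(1/Q(x^n))+2$, and the decoder---knowing $Q$ and the length $n$---reconstructs $x^n$. It then remains to evaluate $\log_2(1/Q(x^n))$: grouping the $n-1$ factors of the product according to the value $(a,b)$ of $(x_i,x_{i+1})$, which occurs exactly $N(a,b)$ times, yields
\begin{equation*}
\log_2\frac{1}{Q(x^n)}=\log_2 k+\sum_{i=1}^{n-1}\log_2\frac{N(x_i)}{N(x_i,x_{i+1})}=\log_2 k+\sum_{a,b}N(a,b)\log_2\frac{N(a)}{N(a,b)}=\log_2 k+(n-1)H_1(x^n),
\end{equation*}
with the convention $0\cdot\log_2(N(a)/0)=0$. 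Combining with the arithmetic-coding bound gives $L_{seq}(x^n)\le\log_2 k+(n-1)H_1(x^n)+2\le\log_2 k+(n-1)H_1(x^n)+3$.

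There is no deep obstacle here; the only points requiring care are the edge cases. States $a$ with $N(a)=0$ never occur as a ``from''-state in $x^n$, so the arbitrary choice of $\hat{K}(\cdot\mid a)$ does not affect $Q(x^n)$ and $Q$ remains a valid normalized distribution; and the $N(a,b)=0$ terms in $H_1(x^n)$ must be read with the usual $0\log 0$ convention so that the regrouping identity is exact. Everything else is the standard near-optimality of arithmetic (equivalently Shannon--Fano--Elias) coding against a fixed distribution, together with the elementary rewriting of the product $\prod_i \hat{K}(x_{i+1}\mid x_i)$ in terms of the counts.
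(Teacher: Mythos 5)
Your proposal is correct and takes essentially the same approach as the paper: both code against the empirical first-order Markov plug-in and use the standard arithmetic-coding codelength bound, the only cosmetic difference being that you absorb the initial symbol into $Q$ via a uniform prior (giving $+2$) while the paper encodes $x_1$ separately with $\lceil\log_2 k\rceil$ bits and then applies arithmetic coding to the remaining $n-1$ transitions (giving $+3$). Your accounting of the $N(a,b)=0$ and $N(a)=0$ edge cases is a welcome bit of care that the paper leaves implicit.
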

\begin{proof}
We first encode the sequence $x_1$ using fixed $\lceil\log_2 k \rceil $ bits. Next, we encode the remaining $(n-1)$ symbols  using arithmetic coding \cite{willems1995context} (section IV) with the first order model distribution $q(b|a) = N(a,b)/N(a)$. Using theorem 1 of  \cite{willems1995context}, the codelength of $x^n$ is:
\begin{align}
L_{seq}(x^n) &\leq \lceil \log_2 k \rceil + \left(\sum_{i=1}^{n-1} \log_2 \frac{1}{q(x_{i+1}|x_i)} + 2 \right) \label{eqn.arithmetic}\\
        &= \log_2 k + 1 + \sum_{a=1}^k \sum_{b=1}^k N(a,b) \log_2 \frac{N(a)}{N(a,b)} + 2\\
        &= \log_2 k + (n-1)H_1(x^n) + 3     
\end{align}
This completes the proof.
\end{proof}

Let $x^n$ be a given sequence over the alphabet $\mathcal{X} = [k]$. Consider the following compressor:
\begin{enumerate}
\item Store all the parameters $N(a,b), \forall a,b \in [k]$ using the universal prefix-free code in lemma \ref{lemma.universalprefix}.
\item Use the parameters $N(a,b)$ to compress $x^n$ using first-order Markov arithmetic coding as in lemma  \ref{lemma.arithmetic_coding}.
\end{enumerate}

Then, the codelength $\hat{L}(x^n)$ is bounded as:

\begin{align}
\hat{L}(x^n) &= L_{param}(x^n) + L_{seq}(x^n)\\
             &\leq \left[ 2k^2\log_2 \left( \frac{n}{k^2}  + 1\right) + k^2 \right] + [ \log_2 k + (n-1)H_1(x^n) + 3 ]
\end{align}

%

We now take a look at redundancy $R_n(k)$:

\begin{align}
R_n(k) &= \inf_{L} \sup_{\theta \in \mathcal{M}_{2}(k)} r_n(L,\theta) \\
    &\leq \sup_{\theta \in \mathcal{M}_{2}(k)} \frac{1}{n} \left( \mathbb{E}_{\theta}[\hat{L}(X^n)] - H_{\theta}(X^n)) \right)\\
    &= \sup_{\theta \in \mathcal{M}_{2}(k)} \frac{1}{n} \left( \mathbb{E}_{\theta}[\hat{L}(X^n)] - H_{\theta}(X_1) - (n-1)H_{\theta}(X_2|X_1) \right)\\
    &\leq \sup_{\theta \in \mathcal{M}_{2}(k)} \frac{1}{n} \left( \mathbb{E}_{\theta}[\hat{L}(X^n)] - (n-1)H_{\theta}(X_2|X_1) \right)\\
    &\leq \frac{2k^2}{n}\log_2 \left( \frac{n}{k^2}  + 1\right) + \frac{k^2}{n} 
    + \sup_{\theta \in \mathcal{M}_{2}(k)} \frac{n-1}{n} \left( \mathbb{E}_{\theta}[H_1(X^n)] - H_{\theta}(X_2|X_1) \right) + \frac{\log_2 k + 3}{n}\\
    &\leq \frac{2k^2}{n}\log_2 \left( \frac{n}{k^2}  + 1\right) + \frac{k^2}{n} + \frac{\log_2 k + 3}{n} \label{equation.concavity}
\end{align}

Where equation (\ref{equation.concavity}) is true because of the concavity of entropy. This completes the proof of the upper bound.

\bibliographystyle{alpha}
\bibliography{di}

\appendix
\section{Existing Minimax Redundancy Lower Bounds} \label{appendix.davisson}
We analyze the existing lower bound by \cite{davisson1983minimax}. 
\begin{align}
R_n(k) \geq g(k,n) &= \frac{k(k-1)}{2n}\log n + \frac{k(k-1)}{n} \log \frac{1}{k^4}- \frac{k(k-1)}{2n}\log \left[ \frac{C}{1-\left( 1 - \frac{1}{4k^4}\right)^{\frac{1}{2}}}\right] 
\end{align}

We can simplify $g(k,n)$, to get the lower bound:
\begin{align}
g(k,n) &= \frac{k(k-1)}{2n}\log n + \frac{k(k-1)}{n} \log \frac{1}{k^4} - \frac{k(k-1)}{2n}\log \left[ \frac{C}{1-\left( 1 - \frac{1}{4k^4}\right)^{\frac{1}{2}}}\right] \\
    &= \frac{k(k-1)}{2n}\log \frac{n}{k^2} - \frac{5k(k-1)}{2n} \log k^2 - \frac{k(k-1)}{2n} \log C + o\left(\frac{k(k-1)}{n}\right) 
\end{align}

Thus, the effective lower bound on $R_n(k)$ is:

\begin{align}
R_n(k) &\geq \frac{k(k-1)}{2n}\log \frac{n}{k^2} - \frac{5k(k-1)}{2n} \log k^2 - \frac{k(k-1)}{2n} \log C + \left(\frac{k(k-1)}{n}\right)
\end{align}

We observe that the lower bound on redundancy $R_n(k)$ is non-zero only when $n \gg k^2\log k$. We aim to improve the lower bound when $n \asymp k^2$.

\section{Proof of Lemma \ref{lemma.spectrum}} \label{appendix.randommatrix} 

To analyze the spectrum of the transition matrix $K$, we construct a symmetric matrix $S$, which has the same spectrum as $K$ almost surely.
\begin{lemma}\label{lemma.spectral_equivalence}
(Spectral Equivalence) Almost surely, for a large $k$, the spectrum of the transition matrix $K$ coincides with the spectrum of the symmetric matrix $S$ defined as: 

\begin{equation}
S_{ij} = \sqrt{\frac{\rho_i}{\rho_j}}K_{ij} = \frac{w_{ij}}{\sqrt{\rho_i \rho_j}}
\end{equation}
\end{lemma}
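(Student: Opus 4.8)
The plan is to realize $S$ as a similarity transform of $K$, so that the spectral coincidence follows from the conjugation-invariance of the characteristic polynomial. Let $D = \mathrm{diag}(\rho_1,\dots,\rho_k)$ be the diagonal matrix of row sums of the weight matrix $W$. The first point to check is that $D$ is invertible almost surely for all large $k$: each $\rho_i = \sum_{j\neq i} w_{ij}$ is a sum of $k-1$ i.i.d.\ $\mathsf{Exp}(1)$ variables and is therefore strictly positive almost surely, and Lemma~\ref{lemma.rho} moreover gives $\max_{1\le i\le k}|\rho_i/k - 1| = o(1)$ a.s., so on an event of probability one we have $\rho_i \ge k/2 > 0$ for every $i$ once $k$ is large. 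Hence $D^{1/2}$ and $D^{-1/2}$ are well defined on this event.

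Next I would verify the algebraic identity $S = D^{1/2} K D^{-1/2}$ by a direct entrywise computation: using $K_{ij} = w_{ij}/\rho_i$,
\[
(D^{1/2} K D^{-1/2})_{ij} = \sqrt{\rho_i}\,K_{ij}\,\frac{1}{\sqrt{\rho_j}} = \sqrt{\tfrac{\rho_i}{\rho_j}}\,\frac{w_{ij}}{\rho_i} = \frac{w_{ij}}{\sqrt{\rho_i\rho_j}} = S_{ij}.
\]
Since $w_{ij}=w_{ji}$, the matrix $S$ is symmetric. Because $S$ and $K$ are conjugate through the invertible matrix $D^{1/2}$, they share the same characteristic polynomial and hence the same eigenvalues with the same multiplicities; in particular the real spectrum of the symmetric matrix $S$ coincides with the spectrum of $K$, which is the assertion of the lemma.

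I do not expect a genuine obstacle here: the entire content is the identity $S = D^{1/2}KD^{-1/2}$, which is immediate once $K$ is written in terms of $W$ and $D$. The only subtlety is the well-definedness of $D^{\pm 1/2}$, i.e.\ the strict positivity of all the $\rho_i$ simultaneously, and this is precisely what the ``almost surely, for large $k$'' qualifier encodes; it is supplied by Lemma~\ref{lemma.rho}. With spectral equivalence established, the remaining work in Appendix~\ref{appendix.randommatrix} is to bound the eigenvalues of the \emph{symmetric} matrix $S$ by comparing $S$ with the rescaled Wigner matrix $\hat W/k$ and invoking the Bai--Yin bound~(\ref{eqn.wigner}) together with the row-sum control of Lemma~\ref{lemma.rho}, which yields Lemma~\ref{lemma.spectrum}.
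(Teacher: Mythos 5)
Your proof is correct and is precisely the argument behind the cited Lemma 2.1 of \cite{bordenave2010spectrum}: the paper disposes of this lemma by reference, and what that reference proves is exactly the similarity $S = D^{1/2} K D^{-1/2}$ with $D = \mathrm{diag}(\rho_1,\dots,\rho_k)$, so the two routes coincide. One minor note: for this lemma you do not actually need Lemma~\ref{lemma.rho}; strict positivity of every $\rho_i$ already holds almost surely for every fixed $k$ (each $\rho_i$ is a positive sum of $\mathsf{Exp}(1)$ variables), so $D^{\pm 1/2}$ is well defined a.s.\ without any ``large $k$'' restriction, and that qualifier is inherited from the more general setting of the cited reference rather than being load-bearing here.
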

The lemma is proved in lemma 2.1 \cite{bordenave2010spectrum}. 

We now use the lemma \ref{lemma.rho}, which allows us to estimate $\rho_i = k(1+ o(1))$, to compare the spectrum of the matrix $\sqrt{k}K$ with the matrix $\frac{W}{\sqrt{k}}$.

\begin{lemma}\label{lemma.bulkbehavior}
(Bulk behavior) The ESD (empirical spectral density) of $\sqrt{k}K$ weakly converges to the Wigner's semi-circle law $\mathcal{W}_2$. 
\begin{equation}
\mu_{\sqrt{k}K} \xrightarrow[k \rightarrow \infty]{w} \mathcal{W}_{2}
\end{equation}

where the Wigner's semi-circle law $\mathcal{W}_2$ is given by:

\begin{equation*}
x \mapsto \frac{1}{2\pi} \sqrt{4 - x^2} \mathbb{1}_{[-2,2]}(x)
\end{equation*}
\end{lemma}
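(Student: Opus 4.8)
The plan is to present $\sqrt{k}K$ as a negligible perturbation of a genuine Wigner matrix (appropriately normalized) and then quote the classical semicircle law; there are three reductions. \textbf{Step 1.} By Lemma~\ref{lemma.spectral_equivalence}, almost surely for all large $k$ the spectrum of $K$ coincides with that of the symmetric matrix $S$ with $S_{ij}=w_{ij}/\sqrt{\rho_i\rho_j}$, so it suffices to show $\mu_{\sqrt{k}S}\xrightarrow{w}\mathcal{W}_2$. \textbf{Step 2.} We compare $\sqrt{k}S$ with $W/\sqrt{k}$: since $\sqrt{k}S_{ij}-\frac{w_{ij}}{\sqrt{k}}=\frac{w_{ij}}{\sqrt{k}}\bigl(\frac{k}{\sqrt{\rho_i\rho_j}}-1\bigr)$, and $\max_{i,j}\bigl|\frac{k}{\sqrt{\rho_i\rho_j}}-1\bigr|=o(1)$ a.s.\ by Lemma~\ref{lemma.rho} (eq.~\eqref{equation.1}), while $\frac{1}{k^2}\sum_{i,j}w_{ij}^2\to\mathbb{E}[w_{12}^2]=2$ a.s.\ by the strong law of large numbers, we obtain $\frac1k\|\sqrt{k}S-W/\sqrt{k}\|_{\mathrm{F}}^2=o(1)$ a.s. By the Hoffman--Wielandt inequality this yields $\frac1k\sum_i(\lambda_i(\sqrt{k}S)-\lambda_i(W/\sqrt{k}))^2\to 0$ a.s., which forces $\mu_{\sqrt{k}S}$ and $\mu_{W/\sqrt{k}}$ to have the same weak limit.

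\textbf{Step 3.} It remains to identify the limit of $\mu_{W/\sqrt{k}}$. Recall the Wigner matrix $\hat{W}$ with $\hat w_{ij}=w_{ij}-1$ (off-diagonal entries i.i.d., mean $0$, variance $1$; constant diagonal), so that $W=\hat{W}+J$ with $J$ the all-ones matrix. Since $J/\sqrt{k}$ has rank one, the rank inequality $\|F_A-F_B\|_\infty\le\frac1k\operatorname{rank}(A-B)$ gives $\|F_{W/\sqrt{k}}-F_{\hat{W}/\sqrt{k}}\|_\infty\le 1/k\to 0$; hence $\mu_{W/\sqrt{k}}$ and $\mu_{\hat{W}/\sqrt{k}}$ share the same weak limit (if instead one works with the convention $\hat w_{ii}=0$, the extra $-k^{-1/2}I$ is absorbed by a vanishing spectral shift). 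Finally, the classical Wigner semicircle law applied to the Wigner matrix $\hat{W}$ gives $\mu_{\hat{W}/\sqrt{k}}\xrightarrow{w}\mathcal{W}_2$ a.s. Chaining Steps~1--3, $\mu_{\sqrt{k}K}\xrightarrow{w}\mathcal{W}_2$ a.s.

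\textbf{Main obstacle.} The only delicate step is Step~2 --- passing from the row-normalized operator $K$ (equivalently $S$) to the unnormalized Wigner-type matrix $W/\sqrt{k}$ without disturbing the bulk; this is exactly what the uniform row-sum concentration $\max_i|\rho_i/k-1|=o(1)$ of Lemma~\ref{lemma.rho} is designed to supply, so the estimate closes. Everything else is either already cited (the spectral equivalence and the random-matrix analysis of~\cite{bordenave2010spectrum}, the Bai--Yin bound, and the semicircle law) or elementary perturbation theory (Hoffman--Wielandt and the rank inequality). Alternatively, the whole statement can essentially be read off from the random Markov matrix results of~\cite{bordenave2010spectrum}.
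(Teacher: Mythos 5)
Your proposal is correct and follows essentially the same route as the paper: Step~1 is identical (spectral equivalence with $S$), and Step~2 is the same calculation --- the paper bounds the Lévy distance by $L^3(F_{\sqrt{k}S},F_{W/\sqrt{k}}) \le \frac{1}{k}\operatorname{Tr}\bigl((\sqrt{k}S-W/\sqrt{k})^2\bigr)$, which is exactly the Hoffman--Wielandt consequence you invoke, and then uses Lemma~\ref{lemma.rho} together with the law of large numbers on $\frac{1}{k^2}\sum w_{ij}^2$ to send it to zero. The one place you go beyond the paper is Step~3: the paper stops after showing $\mu_{\sqrt{k}S}$ and $\mu_{W/\sqrt{k}}$ share a weak limit and simply asserts the conclusion, whereas you explicitly identify that limit by writing $W=\hat W + J$, removing the rank-one all-ones matrix $J/\sqrt{k}$ via the rank inequality, and then applying the classical semicircle law to the Wigner matrix $\hat W$. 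This extra step is genuinely needed to finish the argument, so your write-up is in fact more complete than the one in the paper.
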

\begin{proof}
First of all, from the lemma \ref{lemma.spectral_equivalence}, the spectrum of $S$ is equivalent to that of $K$ a.s. (for large $k$). Thus, it is sufficient to analyze the spectrum of $\sqrt{k}S$. To show the weak convergence, we bound the Levy distance between the cumulative distributions corresponding to the ESD of matrices $\sqrt{k}S$ and $W/\sqrt{k}$. Let $F_{\sqrt{k}S}$ and $F_{W/{\sqrt{k}}}$ be the cumulative distributions, then:

\begin{align}
L^3(F_{\sqrt{k}S},F_{W/{\sqrt{k}}}) &\leq \frac{1}{k} Tr((\sqrt{k}S - {W/{\sqrt{k}}})^2) \\
								   &= \frac{1}{k} \sum_{i,j}^k \frac{w_{ij}^2}{k} \left( \frac{k}{\sqrt{\rho_i \rho_j}} - 1 \right)^2\\
								   &\leq O(\delta^2) \left( \frac{1}{k^2} \sum_{i,j}^k w_{ij}^2 \right)\\
								   &\rightarrow 2O(\delta^2) \mbox{ a.s, as } k \rightarrow \infty 
\end{align}
This proves the weak convergence of the $\mu_{\sqrt{k}K}$ to the wigner semi-circle law $\mathcal{W}_{2}$.
\end{proof} 

Note that, even though $\lambda_1(\sqrt{k}K) = \sqrt{k} \rightarrow \infty$ as $k \rightarrow \infty$, the weak limit of of $\mu_{\sqrt{k}K}$ is not affected since $\lambda_1(\sqrt{k}K)$ has weight $1/k$. The theorem thus implies that the bulk of the spectrum $\sigma(K)$ collapses as $k^{-1/2}$, but does not give a characterization for $\lambda_2(\sqrt{k}K)$, which is what is required.

To prove lemma \ref{lemma.spectrum}, we represent the symmetric matrix $S$ as a combination of a rank one matrix $P$ corresponding to the stationary distribution, and a "noise" matrix $S-P$ with nearly i.i.d entries. Bounding the spectral norm of the "noise" matrix $S-P$ gives us the result.

\begin{proof}(lemma \ref{lemma.spectrum})
Since $K$ is almost surely irreducible, for large enough $k$, the eigenspace of $S$ of eigenvalue $1$ is a.s. of size 1. and is the span of the vector $[\sqrt{\rho_1},\sqrt{\rho_2},\ldots,\sqrt{\rho_k}]$. Consider the symmetric matrix $P$: 

\begin{equation}\label{equation.80}
P_{ij} = \frac{\sqrt{\rho_i\rho_j}}{\rho}
\end{equation}

By removing $P$ from $S$, we are essentially removing the largest eigenvalue $1$, without touching the other eigenvalues. Thus, the spectrum of the matrix $S-P$ is given by:
\begin{equation}
\{ \lambda_2(S),\lambda_3(S),\ldots,\lambda_k(S)\} \cup \{0\}
\end{equation}

To find $\sqrt{k} \lambda_2(S)$, we now bound the spectral norm of matrix $A = \sqrt{k} (S-P)$. Lemma 2.4 of \cite{bordenave2010spectrum} along with lemma \ref{lemma.rho} gives us the result: 

\begin{equation}\label{equation.81}
\max_{1\leq i \leq k} \sqrt{k}|\lambda_i(S-P)| \leq 2 + o(1) \mbox{ a.s.}
\end{equation}

Equation (\ref{equation.80}) and lemma \ref{lemma.spectral_equivalence} together imply that:
\begin{align}
\max_{2\leq i \leq k} \sqrt{k}|\lambda_i(S)| &\leq 2 + o(1) \mbox{ a.s.}\\
\max_{2\leq i \leq k} \sqrt{k}|\lambda_i(K)| &\leq 2 + o(1) \mbox{ a.s.}\\
\max_{2\leq i \leq k} |\lambda_i(K)| &\leq \frac{2+c}{\sqrt{k}} \mbox{ a.s.}
\end{align}
for some constant $c\ge 0$. This completes the proof.

\end{proof}

\section{Proof of Lemma \ref{lemma.gammaps}} \label{appendix.varproof}
Consider the Markov chain over the tuples $(X_1,X_2) \rightarrow (X_2,X_3) \ldots \rightarrow (X_{n-1},X_n)$ with transition matrix $\tilde{K}$. We first analyze some properties of the transition matrix $\tilde{K}$.

\begin{lemma}\label{lemma.Ktilda}
Let $X_1 \rightarrow X_2 \ldots \rightarrow X_n$ be a reversible Markov chain with transition matrix $K$, then the transition matrix $\tilde{K}$ for the chain: $(X_1,X_2) \rightarrow (X_2,X_3) \ldots \rightarrow (X_{n-1},X_n)$ and its matrix reversibilization $\tilde{K}^{*}$ are:
\begin{align*}
\tilde{K}((a,b),(c,d)) &= \mathbb{1}[b=c]K(c,d) \\
\tilde{K}^{*}((a,b),(c,d)) &= \mathbb{1}[a=d]K(d,c) 
\end{align*}
\end{lemma}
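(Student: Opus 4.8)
The plan is to compute the transition matrix $\tilde{K}$ of the tuple chain directly from its definition and then identify the matrix reversibilization $\tilde{K}^*$ via the general formula $\tilde{K}^*((a,b),(c,d)) = \tilde{\pi}(c,d)\tilde{K}((c,d),(a,b))/\tilde{\pi}(a,b)$, where $\tilde{\pi}$ is the stationary distribution of the tuple chain. First I would observe that the tuple process $Y_r = (X_r, X_{r+1})$ is itself a Markov chain, since given $Y_r = (X_r, X_{r+1})$ the next state $Y_{r+1} = (X_{r+1}, X_{r+2})$ depends only on $X_{r+1}$ (the second coordinate of $Y_r$) through $K$. Concretely, a transition from $(a,b)$ to $(c,d)$ is possible only if the overlap constraint $b = c$ holds, and in that case the probability is exactly $P(X_{r+2} = d \mid X_{r+1} = b) = K(b,d) = K(c,d)$. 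This gives
\begin{equation*}
\tilde{K}((a,b),(c,d)) = \mathbb{1}[b=c]\,K(c,d),
\end{equation*}
which is the first claimed identity.

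Next I would pin down the stationary distribution of the tuple chain. When $X_1 \to \cdots \to X_n$ is the stationary (reversible) chain with stationary distribution $\pi$ and kernel $K$, the pair $(X_r, X_{r+1})$ has law $\tilde{\pi}(a,b) = \pi_a K(a,b)$ for every $r$, and one checks this is stationary for $\tilde{K}$: $\sum_{(a,b)} \tilde{\pi}(a,b)\tilde{K}((a,b),(c,d)) = \sum_a \pi_a K(a,c) K(c,d) = \pi_c K(c,d) = \tilde{\pi}(c,d)$, using that $\pi$ is stationary for $K$. Then I would plug into the reversibilization formula:
\begin{align*}
\tilde{K}^*((a,b),(c,d)) &= \frac{\tilde{\pi}(c,d)}{\tilde{\pi}(a,b)}\,\tilde{K}((c,d),(a,b)) \\
&= \frac{\pi_c K(c,d)}{\pi_a K(a,b)}\,\mathbb{1}[d=a]\,K(a,b) \\
&= \mathbb{1}[a=d]\,\frac{\pi_c K(c,d)}{\pi_a} \\
&= \mathbb{1}[a=d]\,\frac{\pi_d K(d,c)}{\pi_d} = \mathbb{1}[a=d]\,K(d,c),
\end{align*}
where the last line uses the detailed balance relation $\pi_c K(c,d) = \pi_d K(d,c)$ for the reversible chain $K$ (and on the indicator set $a = d$). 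This yields the second claimed identity.

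I do not expect a serious obstacle here; the only subtlety worth stating carefully is that reversibility of $K$ is used exactly once, in the final simplification of $\tilde{K}^*$, whereas the formula for $\tilde{K}$ and the identification $\tilde{\pi}(a,b) = \pi_a K(a,b)$ need only stationarity. It is also worth remarking (as the paper does) that $\tilde{K}^* \neq \tilde{K}$ in general: the overlap constraint for $\tilde{K}$ is $b = c$ while for $\tilde{K}^*$ it is $a = d$, so the tuple chain is not reversible even when the underlying chain is. This lemma then feeds into the bound on $\gamma_{ps}(\tilde{K})$ in Lemma \ref{lemma.gammaps} by letting one compute $(\tilde{K}^*)\tilde{K}$ (or a power thereof) explicitly and relate its spectral gap to $\gamma^*(K)$.
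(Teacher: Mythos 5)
Your proof is correct and takes essentially the same approach as the paper: compute $\tilde{K}$ directly from the Markov property, then apply the multiplicative-reversibilization formula using the tuple-chain stationary distribution $\tilde{\pi}(a,b) = \pi_a K(a,b)$ together with detailed balance for $K$. Your explicit verification that $\tilde{\pi}$ is stationary for $\tilde{K}$ is a small additional check the paper leaves implicit, but the argument is otherwise the same.
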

\begin{proof}
The transition matrix for the chain $(X_1,X_2) \rightarrow (X_2,X_3) \ldots \rightarrow (X_{n-1},X_n)$ is given by:
\begin{align}
\tilde{K}((a,b),(c,d)) &= P(X_2 = c, X_3 = d | X_1 = a, X_2 = b) \\
                       &= P(X_2 = c | X_1 = a, X_2 = b) P( X_3 = d | X_1 = a, X_2 = b, X_2 = c)\\
                       &= \mathbb{1}[b=c]P( X_3 = d | X_2 = c) \label{eqn.markov}\\
                       &= \mathbb{1}[b=c]K(c,d)
\end{align}
where, equation(\ref{eqn.markov}) holds because of the Markovity condition.

As the Markov chain $(X_1,X_2) \rightarrow (X_2,X_3) \ldots \rightarrow (X_{n-1},X_n)$ is in general non-reversible, the multiplicative reversibilization of $\tilde{K}$ is:
\begin{align}
\tilde{K}^{*}((c,d),(a,b)) &= \tilde{K}((a,b),(c,d))\frac{\pi(a,b)}{\pi(c,d)} \\
			   &= \mathbb{1}[b=c]K(c,d)\frac{\pi(a)K(a,b)}{\pi(c)K(c,d)} \\
               &= \mathbb{1}[b=c]\frac{\pi(b)K(b,a)}{\pi(c)} \\
               &= \mathbb{1}[b=c]\frac{\pi(b)K(b,a)}{\pi(b)} \\
\tilde{K}^{*}((c,d),(a,b)) &= \mathbb{1}[b=c]K(b,a) \\
\tilde{K}^{*}((a,b),(c,d)) &= \mathbb{1}[a=d]K(d,c)              
\end{align}

This proves the lemma.
\end{proof}

\begin{lemma}\label{lemma.T}
Let $\mathcal{T}$ be the $k \times k$ matrix corresponding to the transformation: 
\begin{align}
\mathcal{T} M((a,b),(c,d)) &= M((b,a),(c,d))
\end{align}
 
Then, $\tilde{K}$ and $\tilde{K}^{*}$ have the property:
\begin{align}
((\tilde{K}^{*})^r\tilde{K}^r) &= (\mathcal{T} \tilde{K}^r)^2
\end{align}
\end{lemma}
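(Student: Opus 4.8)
The plan is to recognize $\mathcal{T}$ as the involutive permutation matrix on the product state space $[k]^2$ that interchanges the two coordinates of an index pair, and then to deduce the claimed identity from the single algebraic fact that $\tilde{K}^{*}$ is the conjugate $\mathcal{T}\tilde{K}\mathcal{T}$ of $\tilde{K}$ by this swap.

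First I would record the elementary properties of $\mathcal{T}$. As a matrix on $[k]^2$, $\mathcal{T}\big((a,b),(a',b')\big) = \mathbb{1}[(a',b')=(b,a)]$, so $\mathcal{T}^{2} = I$; left multiplication by $\mathcal{T}$ swaps the coordinates of the \emph{row} index (this is the defining relation $\mathcal{T}M((a,b),(c,d)) = M((b,a),(c,d))$), while right multiplication by $\mathcal{T}$ swaps the coordinates of the \emph{column} index, i.e.\ $(M\mathcal{T})((a,b),(c,d)) = M((a,b),(d,c))$. Next I would verify the key identity $\tilde{K}^{*} = \mathcal{T}\tilde{K}\mathcal{T}$: using the explicit forms from Lemma~\ref{lemma.Ktilda}, the $((a,b),(c,d))$ entry of $\mathcal{T}\tilde{K}\mathcal{T}$ equals $\tilde{K}\big((b,a),(d,c)\big) = \mathbb{1}[a=d]\,K(d,c)$, which is exactly $\tilde{K}^{*}((a,b),(c,d))$. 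Conceptually, conjugating the tuple chain's transition matrix by the coordinate swap produces the transition matrix of the time-reversed tuple chain, which is $\tilde{K}^{*}$.

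Finally, since $\mathcal{T}^{2} = I$, iterating the identity gives $(\tilde{K}^{*})^{r} = (\mathcal{T}\tilde{K}\mathcal{T})^{r} = \mathcal{T}\tilde{K}^{r}\mathcal{T}$, so that
\[
(\tilde{K}^{*})^{r}\tilde{K}^{r} \;=\; \mathcal{T}\tilde{K}^{r}\mathcal{T}\tilde{K}^{r} \;=\; (\mathcal{T}\tilde{K}^{r})^{2},
\]
which is the assertion. The main obstacle here is purely bookkeeping: one must be careful that left- and right-multiplication by $\mathcal{T}$ act on the row- and column-index pairs respectively, and line this up correctly against the indicator patterns $\mathbb{1}[b=c]$ and $\mathbb{1}[a=d]$ appearing in Lemma~\ref{lemma.Ktilda}; once the conjugation identity $\tilde{K}^{*}=\mathcal{T}\tilde{K}\mathcal{T}$ is established, the remainder is the one-line telescoping computation above.
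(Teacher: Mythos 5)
Your proof is correct and follows essentially the same route as the paper's: establish the conjugation identity $\tilde{K}^{*} = \mathcal{T}\tilde{K}\mathcal{T}$ by comparing entries against Lemma~\ref{lemma.Ktilda}, then use $\mathcal{T}^{2}=I$ to telescope $(\tilde{K}^{*})^{r}\tilde{K}^{r} = \mathcal{T}\tilde{K}^{r}\mathcal{T}\tilde{K}^{r} = (\mathcal{T}\tilde{K}^{r})^{2}$. The only difference is presentational: you state the left/right action of $\mathcal{T}$ on the row/column index pair more explicitly, which makes the entrywise verification a bit cleaner than the paper's slightly informal notation.
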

\begin{proof}
The matrix $\mathcal{T}$ also has the properties:
\begin{align}
 M((a,b),(c,d)) \mathcal{T} &= M((a,b),(d,c)) \\
\mathcal{T}^2 &= I
\end{align}

Then we can show that:
\begin{align}
 \mathcal{T} \tilde{K}((a,b),(c,d)) \mathcal{T} &= \mathcal{T} \mathbb{1}[b=c]K(c,d) \mathcal{T} \\
 			&=  \mathbb{1}[a=c]K(c,d) \mathcal{T} \\
            &=  \mathbb{1}[a=d]K(d,c) \\
            &= \tilde{K}^{*}((a,b),(c,d))\label{eq1.Kstar}
\end{align}

Using equation (\ref{eq1.Kstar}) we can show that, for any $r$:
\begin{align}
((\tilde{K}^{*})^r\tilde{K}^r) &= ( \mathcal{T} \tilde{K}  \mathcal{T} )^r \tilde{K}^r \\
			   &= \mathcal{T} \tilde{K}^r  \mathcal{T} \tilde{K}^r \\
               &= (\mathcal{T} \tilde{K}^r)^2
\end{align}

This completes the proof.
\end{proof}

\begin{lemma}\label{lemma.Ksquare_tksquare}
Matrices $\tilde{K}^2$ and $\mathcal{T} \tilde{K}^2$ have the form:
\begin{align}
\tilde{K}^2((a,b),(c,d)) &= K(b,c)K(c,d)\label{equation.83}\\
\mathcal{T} \tilde{K}^2((a,b),(c,d)) &= K(a,c)K(c,d)\label{equation.84}
\end{align}
\end{lemma}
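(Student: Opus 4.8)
The plan is a direct computation from the explicit entrywise description of $\tilde K$ in Lemma~\ref{lemma.Ktilda}, namely $\tilde K((a,b),(c,d)) = \mathbb{1}[b=c]K(c,d)$. First I would write $\tilde K^2$ as a matrix product, summing over the intermediate tuple state $(e,f)$:
\begin{align*}
\tilde K^2((a,b),(c,d)) &= \sum_{e,f} \tilde K((a,b),(e,f))\,\tilde K((e,f),(c,d)) \\
&= \sum_{e,f} \mathbb{1}[b=e]\,K(e,f)\,\mathbb{1}[f=c]\,K(c,d).
\end{align*}
The indicator $\mathbb{1}[b=e]$ forces $e=b$ in the sum over $e$, and $\mathbb{1}[f=c]$ forces $f=c$ in the sum over $f$, leaving the single surviving term $K(b,c)K(c,d)$. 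This establishes~\eqref{equation.83}.

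For~\eqref{equation.84}, I would simply apply the definition of the operator $\mathcal{T}$ from Lemma~\ref{lemma.T}, which swaps the two coordinates of the row index: $\mathcal{T} M((a,b),(c,d)) = M((b,a),(c,d))$. Applying this to $M = \tilde K^2$ and using the formula just derived with the roles of $a$ and $b$ interchanged gives
\[
\mathcal{T}\tilde K^2((a,b),(c,d)) = \tilde K^2((b,a),(c,d)) = K(a,c)K(c,d),
\]
which is~\eqref{equation.84}. That completes the proof.

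The only thing requiring care is the index bookkeeping — keeping straight that $\mathcal{T}$ acts on the row (first) tuple index and that the matrix product sums over tuple states rather than single states — but there is no genuine obstacle here; the lemma is a purely mechanical consequence of Lemmas~\ref{lemma.Ktilda} and~\ref{lemma.T}. One could alternatively derive~\eqref{equation.84} by multiplying out $\mathcal{T}\tilde K\cdot \mathcal{T}$-type expressions as in Lemma~\ref{lemma.T}, but the substitution route above is shortest.
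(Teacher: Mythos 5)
Your proof is correct and follows essentially the same route as the paper: expand $\tilde K^2$ as a sum over the intermediate tuple state, let the two indicators collapse the sum, then apply the row-index swap defining $\mathcal{T}$. If anything your write-up is slightly cleaner — the paper has a small typo at this step, writing $\tilde K^2((b,c),(c,d))$ where the correct substitution is $\tilde K^2((b,a),(c,d))$, which you have right.
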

\begin{proof}
Using lemma \ref{lemma.Ktilda}, we obtain:
\begin{align}
\tilde{K}^2((a,b),(c,d)) &= \sum_{e,f} \tilde{K}((a,b),(e,f)) \tilde{K}((e,f),(c,d)) \\
                         &= \sum_{e,f} \mathbb{1}[b=e]K(e,f)\mathbb{1}[f=c]K(c,d)\\
                         &= K(b,c)K(c,d) 
\end{align}
This proves the equation (\ref{equation.83}). We now use the definition of $\mathcal{T}$ to obtain:

\begin{align}
\mathcal{T}\tilde{K}^2((a,b),(c,d)) &= \tilde{K}^2((b,c),(c,d)) \\
                         &= K(a,c)K(c,d) 
\end{align}
This proves the lemma.
\end{proof}

\begin{lemma}\label{lemma.KtKsq}
Matrices $K$ and $\mathcal{T}\tilde{K}^2$ have identical non-zero eigenvalues.
\end{lemma}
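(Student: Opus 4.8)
The plan is to exhibit an explicit rank factorization of the $k^2 \times k^2$ matrix $\mathcal{T}\tilde{K}^2$ through $\reals^k$, and then invoke the standard fact that $AB$ and $BA$ have the same nonzero eigenvalues.

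First I would record, from Lemma~\ref{lemma.Ksquare_tksquare}, the explicit form $\mathcal{T}\tilde{K}^2((a,b),(c,d)) = K(a,c)K(c,d)$. The key observation is that the right-hand side does not depend on $b$, so $\mathcal{T}\tilde{K}^2$ factors through functions of a single coordinate. Concretely, let $A$ be the $k^2 \times k$ matrix with entries $A((a,b),c) = K(a,c)$ and $B$ the $k \times k^2$ matrix with entries $B(c,(e,f)) = \mathbb{1}[c=e]K(e,f)$. A one-line computation gives $(AB)((a,b),(c,d)) = \sum_{c'} K(a,c')\,\mathbb{1}[c'=c]\,K(c,d) = K(a,c)K(c,d)$, so $AB = \mathcal{T}\tilde{K}^2$. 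Computing the reverse product, $(BA)(c,c') = \sum_{e,f}\mathbb{1}[c=e]K(e,f)K(e,c') = K(c,c')\sum_f K(c,f) = K(c,c')$, where the last step uses that the rows of $K$ sum to $1$; hence $BA = K$.

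The lemma then follows from the elementary identity $\det(\lambda I_{k^2} - AB) = \lambda^{k^2-k}\det(\lambda I_{k} - BA)$, valid for any $A \in \reals^{k^2\times k}$, $B \in \reals^{k\times k^2}$, which shows that the nonzero eigenvalues of $\mathcal{T}\tilde{K}^2$ coincide with those of $K$. Equivalently, and this is the conceptual content: any eigenvector of $\mathcal{T}\tilde{K}^2$ with nonzero eigenvalue lies in the image of $\mathcal{T}\tilde{K}^2$, which by the displayed formula consists exactly of the functions $g(a,b)$ depending only on $a$; under the identification of such $g$ with $(g(1),\dots,g(k))\in\reals^k$, the restriction of $\mathcal{T}\tilde{K}^2$ to this invariant subspace is precisely multiplication by $K$.

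I do not anticipate a genuine obstacle here; the only points requiring care are getting the two factor matrices and their index bookkeeping right, and noting that it is the row-stochasticity of $K$ that makes $BA$ equal to $K$ itself rather than $K$ pre- or post-multiplied by a diagonal.
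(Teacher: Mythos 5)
Your proof is correct and takes a genuinely different (and cleaner) route than the paper. The paper argues directly with eigenvectors: starting from the formula $\mathcal{T}\tilde{K}^2((a,b),(c,d)) = K(a,c)K(c,d)$ it observes that, for a nonzero eigenvalue $\eta$, the right side of the eigenvalue equation is independent of $b$, forces the eigenvector $V(a,b)$ to depend only on $a$, and identifies that $a$-vector as an eigenvector of $K$ with eigenvalue $\eta$; the converse direction is then checked separately. Your rank factorization $\mathcal{T}\tilde{K}^2 = AB$ with $BA = K$ (where $A\in\reals^{k^2\times k}$, $B\in\reals^{k\times k^2}$, and row-stochasticity of $K$ is what collapses $BA$ to $K$) collapses both directions into the single identity $\det(\lambda I_{k^2}-AB)=\lambda^{k^2-k}\det(\lambda I_k - BA)$, and it buys you something the paper's eigenvector argument does not literally give: equality of algebraic multiplicities of the nonzero eigenvalues, not just a bijection between eigenspaces. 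Your closing remark — that the image of $\mathcal{T}\tilde{K}^2$ consists of functions depending only on the first coordinate, on which the operator acts as $K$ — is precisely the paper's argument restated, so in effect you have recorded both proofs.
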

\begin{proof}

Let $V$ be an eigenvector of the matrix $\mathcal{T} \tilde{K}^2$ with a non-zero eigenvalue $\eta$. This implies that:
\begin{align}
\eta V((a,b),1) &= \sum_{c,d} \mathcal{T} \tilde{K}^2((a,b),(c,d))V((a,b),1)\\
			    &= \sum_{c,d} K(a,c)K(c,d) V((a,b),1) \\
			    &= \sum_{c} K(a,c)V((a,b),1) \sum_d K(c,d) \\
			    &= \sum_{c} K(a,c)V((a,b),1) 
\end{align}

Thus, this shows that for any $b \in [k]$, the vector $V((.,b),1)$ is an eigenvector of matrix $K$ with eigenvalue $\eta$. 

Conversely, let $v = [v_1,v_2,\ldots,v_k]^T$ be an eigenvector of the matrix $K$ with non-zero eigenvalue $\eta$. Then, the vector $V((a,b),1) = v_a$ is an eigenvector of the matrix $\mathcal{T} \tilde{K}^2$. Thus, together this implies that the non-zero eigenvalues of matrices $K$ and $\mathcal{T} \tilde{K}^2$ are identical. 
%


\end{proof}

We now come to the proof of lemma \ref{lemma.gammaps}.
\begin{proof} (lemma \ref{lemma.gammaps})
Using lemma \ref{lemma.KtKsq} and lemma \ref{lemma.T} we get:
\begin{align*}
\lambda_2((\mathcal{T} \tilde{K}^2)^2) &= \max(\lambda_2(K)^2,\lambda_k(K)^2)\\
\gamma((\tilde{K}^*)^2\tilde{K}^2) &= 1 - \max(\lambda_2(K)^2,\lambda_k(K)^2) \\
                                   &\geq 1 -  \max(|\lambda_2(K)|,|\lambda_k(K)|) \\
                                   &= \gamma^*(K) 
\end{align*}

Now using the definition of the pseudo-spectral gap $\gamma_{ps}(\tilde{K})$, we obtain:                                  
\begin{align*}                     
\gamma_{ps}(\tilde{K}) &\geq \frac{\gamma((\tilde{K}^*)^2\tilde{K}^2)}{2} \\
                       &\geq  \frac{\gamma^*(K)}{2} 
\end{align*}
This proves the lemma.
\end{proof}

\end{document}